\newcommand{\algotab}{\hspace{12pt}}
\let\oldnl\nl
\newcommand{\nonl}{\renewcommand{\nl}{\let\nl\oldnl}}
\newtheorem{theorem}{Theorem}[section]
\newtheorem{corollary}[theorem]{Corollary}
\newtheorem{lemma}[theorem]{Lemma}
\newtheorem{definition}[theorem]{Definition}
\newtheorem{remark}[theorem]{Remark}
\newtheorem{assumption}{Assumption}
  \providecommand\BibTeX{{%
    \normalfont B\kern-0.5em{\scshape i\kern-0.25em b}\kern-0.8em\TeX}}}
\begin{document}

\title{Who Needs Consensus? A Distributed Monetary System Between Rational Agents via Hearsay}

\author{Yanni Georghiades}
\email{yanni.georghiades@utexas.edu}
\affiliation{%
  \institution{The University of Texas at Austin}
  \city{Austin}         
  \country{United States}   
}

\author{Robert Streit}
\email{rpstreit@utexas.edu}
\affiliation{%
  \institution{The University of Texas at Austin}
  \city{Austin}         
  \country{United States}   
}

\author{Vijay Garg}
\email{garg@utexas.edu}
\affiliation{%
  \institution{The University of Texas at Austin}
  \city{Austin}         
  \country{United States}   
}


\begin{abstract}
  We propose a novel distributed monetary system called Hearsay that tolerates both Byzantine and rational behavior without the need for agents to reach consensus on executed transactions.
  Recent work \cite{guerraoui2019consensus, collins2020online, auvolat2020money} has shown that distributed monetary systems do not require consensus and can operate using a broadcast primitive with weaker guarantees, such as reliable broadcast. 
  However, these protocols assume that some number of agents may be Byzantine and the remaining agents are perfectly correct.
  For the application of a monetary system in which the agents are real people with economic interests, the assumption that agents are perfectly correct may be too strong. 
  We expand upon this line of thought by weakening the assumption of correctness and instead adopting a fault tolerance model which allows up to $t < \frac{N}{3}$ agents to be Byzantine and the remaining agents to be \textit{rational}.
  A rational agent is one which will deviate from the protocol if it is in their own best interest.
  Under this fault tolerance model, Hearsay implements a monetary system in which all rational agents achieve agreement on executed transactions.
  Moreover, Hearsay requires only a single broadcast per transaction.
  In order to incentivize rational agents to behave correctly in Hearsay, agents are rewarded with transaction fees for participation in the protocol and punished for noticeable deviations from the protocol.
  Additionally, Hearsay uses a novel broadcast primitive called Rational Reliable Broadcast to ensure that agents can broadcast messages under Hearsay's fault tolerance model.
  Rational Reliable Broadcast achieves equivalent guarantees to Byzantine Reliable Broadcast \cite{bracha1987asynchronous} but can tolerate the presence of rational agents.
  To show this, we prove that following the Rational Reliable Broadcast protocol constitutes a Nash equilibrium between rational agents.
  We deem Rational Reliable Broadcast to be a secondary contribution of this work which may be of independent interest.
\end{abstract}



\maketitle

\section{Introduction}

Distributed monetary systems have risen in popularity in recent years in large part due to the success of Bitcoin \cite{Nakamoto2008} and the flood of distributed money transfer protocols which followed it. 
Broadly, these protocols can be categorized into three groups: protocols which require agents to solve Nakamoto consensus \cite{Nakamoto2008}, protocols which require agents to solve Byzantine fault tolerant consensus (which we often refer to as ``classical consensus'') \cite{fischer1985impossibility}, and protocols which require weaker notions of agreement than consensus such as reliable broadcast \cite{bracha1987asynchronous}. 

The most popular category of distributed money transfer protocols consists of work which aims to improve on Nakamoto consensus. 
While great headway has been made, Nakamoto consensus still has several drawbacks.
The most notorious ones are its limited transaction throughput and high transaction settlement times, which are in many ways a byproduct of its lack of finality in executed transactions.
Additionally, to the best of our knowledge, there have been no variants of Nakamoto consensus proposed which do not assume that some amount of decision-making power (typically hashing power for Proof-of-Work \cite{dwork1992pricing}) in the system is correct.  

The next most prominent category of distributed money transfer protocols includes protocols which use more classical Byzantine fault tolerant consensus protocols such as PBFT to agree upon executed transactions. 
These protocols typically offer transaction finality, but it comes at a cost. 
In general, classical consensus is a notoriously difficult problem to solve in asynchronous systems \cite{fischer1985impossibility} and is expensive in terms of both time and message complexity. 
Additionally, whereas Nakamoto consensus requires only a strict majority of correct decision-making power, classical consensus requires that greater than $\frac{2}{3}$ of the agents in the system are correct. 

The third--and most relevant--category of distributed money transfer protocols is composed of protocols which do not require any form of consensus, instead opting for a less costly form of agreement. 
In particular, a line of recent work \cite{guerraoui2019consensus, collins2020online, auvolat2020money} in distributed money transfer protocols which rely weaker broadcast primitives is a key source of inspiration for Hearsay.
The work of Guerraoui et. al. observes that the consensus number \cite{herlihy1991wait} of a distributed money transfer protocol is 1 and proposes a money transfer protocol which relies on secure reliable multicast \cite{malkhi1997high} in lieu of consensus. 
Following this work, Collins et. al. propose and implement a protocol called Astro which uses only a single reliable broadcast \cite{bracha1987asynchronous}  per transaction. 
Similarly, Auvolat et. al. presents a simplified and generic money transfer protocol which also uses only a single reliable broadcast per transaction. 
Following a slightly different approach, ABC \cite{sliwinski2019abc} is a Proof-of-Stake protocol which also requires only a weak form of consensus which does not require termination for messages with Byzantine senders. 
Notably, each of these protocols assumes the same fault tolerance model in which more than $\frac{2}{3}$ of agents must be correct.
This assumption is quite reasonable for many applications in distributed computing, but it may be too strong for a money transfer system in which agents are inherently economically motivated. 
It seems optimistic to assume that a substantial number of agents would remain correct if deviating from the protocol would provide greater money or lower costs. 
In order to counteract this type of self-serving behavior, we forego the assumption of correctness and instead assume that at least $\frac{2}{3}$ of agents are \emph{rational}. 
We extend the works of \cite{guerraoui2019consensus, collins2020online, auvolat2020money} by instilling mechanisms into the money transfer protocol which cause rational agents to prefer following the protocol over deviating. 

In this work we present a novel distributed monetary system called Hearsay that tolerates both Byzantine and rational behavior and operates without the need for consensus. 
Specifically, we contribute:
\begin{itemize}
    \item The Hearsay protocol, a distributed monetary system that tolerates up to $t < \frac{N}{3}$ Byzantine agents and assumes the remaining agents to be rational.
    \item The Rational Reliable Broadcast protocol, a variant of reliable broadcast that that also tolerates up to $t < \frac{N}{3}$ Byzantine agents and assumes the remaining agents to be rational.
    \item Formal analysis and guarantees of Hearsay and Rational Reliable Broadcast in a game theoretic environment.
\end{itemize}
In Section \ref{sec:related_work}, we summarize the related work that influenced our research. We overview distributed money transfer systems which use Nakamoto consensus and Byzantine fault tolerant consensus, distributed money transfer systems which require weaker guarantees than consensus, and 
other examples of applications of game theory in distributed computation. 
In Section \ref{sec:model}, we describe our model which assumes that a bounded number of agents are Byzantine while the remainder are rational. 
In Section \ref{sec:rrb}, we present Rational Reliable Broadcast (RRB), a secondary contribution of this work. 
RRB is a variant of the reliable broadcast protocol which operates in game theoretic environments. 
Not only do we show correctness in the presence of Byzantine faults, but also that our incentive mechanisms cause rational agents to behave correctly. 
Specifically, we show that following the RRB protocol is a Nash equilibrium. 
Rational Reliable Broadcast is the backbone of transaction broadcast in Hearsay, but we also believe that a reliable broadcast protocol with safety and liveness guarantees in the presence of rational agents is of independent interest. 
In Section \ref{sec:hearsay} we describe Hearsay, the main contribution of this work. 
We provide pseudo-code for the Hearsay protocol and prove that it satisfies \emph{Validity}, \emph{Agreement}, and \emph{Integrity} as defined in Section \ref{sec:hearsay_properties}. 
Hearsay achieves these properties in the presence of rational agents by imposing rewards for participation and punishments for deviation from the protocol through a transaction fee system. 
Section \ref{sec:incentives} examines this dynamic in depth, proving that correct behavior in the Hearsay protocol is a Nash equilibrium, as no rational agent can do better (gain more utility) by deviating from the protocol specification. 
Finally in Section \ref{sec:future} we reflect and discuss avenues for future work.

\section{Related Work}\label{sec:related_work}

This paper extends a line of recent work in distributed monetary systems \cite{guerraoui2019consensus, collins2020online, auvolat2020money} which do not require consensus on transactions.
Instead, \cite{guerraoui2019consensus} uses secure reliable multicast \cite{malkhi1997high} and both \cite{collins2020online} and \cite{auvolat2020money} use reliable broadcast \cite{bracha1987asynchronous} for broadcasting messages.
Both broadcast protocols offer weaker guarantees than consensus, including a guarantee of source order rather than total order on message deliveries and a lack of any delivery guarantee for messages from Byzantine senders.
However, each of these papers makes use of the classical fault tolerance model in which up to $t < \frac{N}{3}$ of agents may be Byzantine and the rest are assumed to be correct.
In a monetary system in which the agents are entities with economic interests, this assumption of correctness may not hold in practice. 
Hearsay does not depend on consensus and assumes a weakened fault tolerance model inspired by \cite{aiyer2005bar}. 
In Hearsay, up to $t < \frac{N}{3}$ of agents may be Byzantine and the rest are assumed to be \textit{rational}, meaning they will deviate from the protocol if it is in their own best interest. 
In order to accommodate for this weaker fault tolerance model, we add mechanisms to both the Byzantine reliable broadcast protocol from \cite{bracha1987asynchronous} and to the monetary system from \cite{auvolat2020money} which incentivize rational agents to behave as correct agents. 

There have been numerous distributed money transfer protocols proposed recently, and the protocols which require a form of consensus on executed transactions can be broadly categorized into two groups.
The first group of protocols \cite{lewenberg2015inclusive,sompolinsky2015secure, eyal2016bitcoin} consists of those that adopt a model similar to that of Bitcoin. 
Bitcoin uses a blockchain to achieve Nakamoto consensus in asynchronous networks under the stated assumption that less than $\frac{1}{2}$ of the decision-making power (in this case, hashing power to be used for Proof-of-Work) is controlled by Byzantine agents and the remaining power is controlled by correct agents \cite{Pass2016}.
Later work shows that following the Bitcoin protocol is not incentive-compatible if all agents are assumed to be rational \cite{eyal2014majority, carlsten}. 
Another approach to improving blockchain-based money transfer protocols is to replace Nakamoto consensus with a more classical Byzantine fault tolerant consensus protocol such as PBFT \cite{castro1999practical}. 
These protocols \cite{abraham2016solida, pass2017hybrid} typically offer more classical safety and liveness guarantees than Nakamoto consensus, such as transaction finality.  
However, they still adopt the fault tolerance model that up to $t < \frac{N}{3}$ of agents may be Byzantine and that the rest are correct.
In an approach slightly more similar to that of Hearsay, ABC \cite{sliwinski2019abc} implements a distributed money transfer protocol which requires only a relaxed form of consensus on transactions. 
ABC is able to achieve transaction finality under the familiar assumption that up to $t < \frac{N}{3}$ of the decision-making power (in this case, asset ownership to be used for Proof-of-Stake) is controlled by Byzantine agents and the remaining power is controlled by correct agents.
While Hearsay is a permissioned system and therefore operates under a different execution model from the permissionless protocols, we hold these protocols up as points of contrast in order to highlight the novelty of Hearsay. 
Hearsay does not require consensus--either Nakamoto or classical--on transactions, but Hearsay's most significant deviation from previous distributed monetary systems is in its fault tolerance model, which foregoes any assumption that a majority of agents will behave correctly and instead assumes that these agents will behave rationally. 

There have been a few papers in recent years which apply rationality assumptions to distributed computing protocols. 
Despite the stated fault tolerance model used by Bitcoin, some argue that the Bitcoin protocol is indeed incentive compatible if all agents are rational \cite{biais2019blockchain}, and the effectiveness of transaction fees and block rewards in the Bitcoin implementation supports this claim to some degree. 
More explicitly, \cite{aiyer2005bar} introduces the BAR fault tolerance model, which captures three types of behavior: Byzantine, altruistic (i.e., correct), and rational. 
The same work introduces BAR-B, an asynchronous replicated state machine which achieves safety and liveness guarantees under the BAR model. 
The local retaliation policy used in Rational Reliable Broadcast is inspired by a similar policy used in BAR-B, and the fault tolerance model used in Hearsay is identical to the BAR model except that in Hearsay there are no altruistic agents. 
In the application of multiparty computation, \cite{halpern2004rational} shows that multiparty computation is incentive compatible under the assumption that all agents are rational. 
Extending this work, \cite{abraham2006distributed} shows the same result even if some agents behave unexpectedly and some agents collude in an attempt to increase their utility.

\section{Model}\label{sec:model}
We consider a system of N agents which store their own outgoing transaction log and a local view of the balance of each other agent. 
Agents communicate via direct, asynchronous, and reliable channels with bounded transmission times. We assume that channels are authenticated, so messages sent can be attributed to their initiator and the integrity of messages can be confirmed.
The system is \textit{permissioned}, and all agents are known to each other before communication begins. 
We allow up to $t < \frac{N}{3}$ agents to be \textit{Byzantine}, meaning they may exhibit arbitrary behavior.
This includes but is not limited to faulty behavior, collusion, and malice against other participants.

The remaining agents are assumed to be \textit{rational}, meaning that they make decisions which maximize their own utility gain or rewards.
Rational agents follow the protocol correctly if it is in their best interest to do so, but will deviate if they perceive a better outcome as a result.
Such an assumption is highly applicable in financial systems, such as the monetary system we consider in this work, because the application connects directly with ideas of loss and gain.
This means that we cannot simply show that our protocols are tolerant to Byzantine behavior, but we must also show that rational agents are incentivized to behave correctly.
In our results we show that correct behavior is a \emph{Nash equilibrium} amongst the rational agents. 
This means that under the assumption that all other rational agents are following the protocol, no agent can gain more utility by deviating from the protocol than by behaving correctly.
This argument shows that all rational agents behave correctly as doing otherwise could only reduce their expected utility gain.
In proving these game theoretic results, we assume that there are no externalities available to rational agents, such as extra communication, collusion, or non-rational behaviors such as spite (naturally Byzantine agents may still exhibit arbitrary behavior including these externalities). 
Examining our results in systems which allow collusion between rational agents is an interesting avenue of future work.
 
\section{Rational Reliable Broadcast}
\label{sec:rrb}
Agents in the Hearsay protocol uses reliable broadcast \cite{bracha1987asynchronous} in order to communicate messages.
Since we assume the non-Byzantine agents are rational, they may not execute the reliable broadcast protocol correctly if they see a benefit in deviating.
For example, if there is a cost associated with forwarding a message and an agent believes their participation is not instrumental (i.e. the rest of the network will ``take care of it''), then they will not forward messages. 
In the social sciences this is referred to as the \emph{free rider problem}, the idea that rational agents will choose to reap the benefits of a service without contributing to it.
Due to the large number of rational agents in our model, we must be careful in designing a suitable reliable broadcast protocol that overcomes these challenges. 
The following section details a novel broadcast protocol called Rational Reliable Broadcast (RRB) that we use to broadcast transactions in Hearsay.
\par We start with the foundational Byzantine reliable broadcast protocol (BRB) from \cite{bracha1987asynchronous}. BRB assumes that up to $t < \frac{N}{3}$ agents may be Byzantine and the remaining agents are correct.
The fault tolerance model used by Hearsay assumes that up to $t < \frac{N}{3}$ agents may be Byzantine but that the remaining agents are rational instead of correct.
Agents in RRB follow a simple retaliation policy in order to guarantee fault tolerance under our model. 

Each agent in Hearsay runs $N$ separate RRB instances--one for each agent--and an agent's $j^{th}$ RRB instance is for participating in broadcasts initiated by agent $p_j$.
We use this understanding in the specification of our reliable broadcast protocol and in our proofs. 
An equivalent view of this specification is that each agent has a separate broadcast channel, where the $i^{th}$ channel is owned by agent $p_i$ and $p_i$ is the only agent allowed to initiate broadcasts in this channel. 
Then when an agent executes any part of the RRB protocol on a particular instance, it accesses only the sequence numbers, queues, and variables pertinent to that instance, and retaliation can be performed within a single instance without impacting any other instance.
A RRB instance where agent $p^*$ is the initiator satisfies the following properties. 

\vspace{2mm}
\noindent \textbf{Validity.} If a rational agent $p^*$ broadcasts a message $m$, then $p^*$ eventually delivers $m$. 

\noindent \textbf{Agreement.} If a rational agent delivers a message $m$, then all rational agents eventually deliver $m$. 

\noindent \textbf{Integrity.} For any agent $p^*$ and sequence number $s$, every rational agent delivers at most one message $m$ initiated by $p^*$ with sequence number $s$. Moreover, this delivery only occurs if $m$ was previously broadcast by $p^*$.
\vspace{2mm}

These properties mimic those of BRB \cite{Hadzilacos1994} except that they provide guarantees with respect to rational agents rather than correct agents.
To achieve these properties, we adopt a mechanism used by \cite{aiyer2005bar} with some minor modifications. 
In particular, agents impose the following localized penalty in reliable broadcast communications against other agents who deviate from the correct protocol execution.

\begin{definition}[Retaliation]
 Consider an instance of RRB called $B$. Suppose $B$ is tasked with broadcasting a message, and has a sequence number $s$ internal to its instance of the protocol. If an agent $p_i$ does not receive a message from another agent $p_j$ within the execution of $B$ for sequence number $s$, then $p_i$ \emph{retaliates} by stopping communications with $p_j$ in protocol instance $B$ for any future sequence number $s' > s$ until $p_i$ receives the missing messages from $p_j$ for instance $B$, sequence number $s$. \label{def:retaliation}
\end{definition}

This mechanism is intended to punish lazy agents who attempt to reduce their communication costs in the reliable broadcast protocol. 
Observe that without this mechanism, lazy agents may listen to incoming messages but never send any outgoing messages.

\subsection{RRB Protocol Description}

The full Rational Reliable Broadcast protocol is essentially the protocol from \cite{bracha1987asynchronous}, with a retaliation penalty added using a message queuing system. 
This queuing system updates vectors tracking the sequence numbers of the last messages received by each other agent and delays sending messages out about a sequence number $s$ to a certain agent $p$ until all messages regarding sequence numbers up to and including $s - 1$ have been processed. 
For the sake of self-containment, we reproduce the full algorithm in Algorithm \ref{algo:rrb}. For brevity, presentation is deferred to Appendix \ref{app:rrbalg}

In words, Algorithm \ref{algo:rrb} describes the following: An agent $p^*$ initiates a broadcast in the RRB instance by marking their message with the tag \emph{initial} and sending it to each other agent. 
An agent receiving a message \emph{msg} forwards the contents with the tag \emph{echo} after one of two conditions occur. 
Either the agent received \emph{msg} from $p^*$ itself in an initial message, or it has received enough \emph{echo} or \emph{ready} messages to ensure that $p^*$ was the initial sender of \emph{msg}. 
An agent sends a \emph{ready} message when it knows that \emph{msg} is the only message broadcast by $p^*$ for that sequence number. 
This can be guaranteed because the first agent to send a \emph{ready} message must have seen $(n+t)/2$ \emph{echo} messages, and two separate agents cannot receive this many \emph{echo} messages for distinct message contents because that would imply that a correct agent sent multiple \emph{echo} messages for different message contents. 
This portion of the algorithm is essentially unchanged from \cite{bracha1987asynchronous}, so the reader can find more information there.

To enable retaliation against lazy processes, a queuing system is used to process messages in order of sequence number. 
This queuing system allows agent to detect if another agent has neglected their duties of participating in an instance and exclude this agent from their communications within this instance until the missing messages are received. 
Separate queues are used for incoming \emph{echo}, incoming \emph{ready}, and outgoing messages. 
When an \emph{echo} or \emph{ready} message is received, it is not processed until all messages with lesser sequence numbers are cleared from the queue. 
Likewise when the outgoing queue is flushed, outgoing messages to agents that have not sent messages for prior sequences is delayed until this is remedied. 
Queues are flushed whenever an agent's knowledge changes on receiving a message. 
Within the flushing routine are blocks to ensure multiple messages from an agent for the same sequence number are ignored. 
Within the protocol are also checks to ensure that a correct agent does not send duplicate messages, so these blocks ensure that correct agents do not exhibit deviant behavior such as sending multiple \emph{initial} messages with different contents for the same sequence number. 
\footnote{We remark that this retaliation mechanism does impact performance, as we are forcing agents to react to messages from other agents in order of sequence numbers. 
An interesting direction for future work would be to examine this performance impact or design alternate incentive mechanisms for reliable broadcast in the presence of rational agents.}

\subsection{RRB Correctness}

To show the correctness of RRB, we first show that the retaliation penalty in Definition \ref{def:retaliation} is sufficient to guarantee that all rational agents behave correctly so long as they believe that all other rational agents will also behave correctly. 
We first make the following assumption:

\begin{assumption}\label{ass:prefer}
Agents have a preference towards a $(2t+1)$-sized quorum signalling ``ready'' over not in Rational Reliable Broadcast, due to the context the protocol is being used in.
\end{assumption}

When we refer to the context of the protocol's deployment, we mean whether the use case implies rewards for participation or is directly linked to utility gain such as in a money transfer system.
It is easy to argue that this assumption is achievable in Hearsay.
In Hearsay, all agents are given a reward in the form of a transaction fee when a transaction is executed.
As we will see in Section \ref{sec:hearsay}, a transaction cannot be executed before it is delivered, which does not occur until a quorum is reached. Therefore, in our context, agents will prefer these quorums to form as otherwise they may be blocked from reward, and even correctly interacting with the broadcast (i.e. send and receive funds) with the initiator of the broadcast channel. It is for this reason we have the following conditional result:

\begin{corollary}\label{cor:prefer}
Predicated on Assumption \ref{ass:prefer}, rational agents prefer to be a member of $(2t+1)$-sized quorums signalling ``ready'' in Rational Reliable Broadcast
\end{corollary}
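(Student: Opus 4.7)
The plan is to argue by a pivotal-voter-style analysis combined with an appeal to the retaliation mechanism of Definition \ref{def:retaliation}. Assumption \ref{ass:prefer} gives every rational agent a preference that a $(2t+1)$-sized ``ready''-quorum forms; the corollary strengthens this to a preference for \emph{being a member of} such a quorum. I would close the gap by showing that any rational agent who abstains from signalling ``ready'' either prevents the quorum from forming (violating their Assumption \ref{ass:prefer} preference) or incurs a retaliation cost that strictly dominates any message-cost savings from free-riding.

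First I would handle the boundary case $N = 3t + 1$, where the number of rational agents equals $N - t = 2t + 1$. Here every rational agent is strictly pivotal: in the execution where all $t$ Byzantine agents withhold ``ready'' messages, the abstention of a single rational agent $p$ leaves at most $2t$ ``ready''-signallers, so no quorum can form. Since $p$ strictly prefers a quorum to form by Assumption \ref{ass:prefer}, $p$ strictly prefers to participate, and the corollary follows immediately in this regime by a simple worst-case argument on the Byzantine coalition.

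Next I would handle $N > 3t + 1$, where a lone abstention need not be globally pivotal against every strategy profile. Here I would invoke Definition \ref{def:retaliation}: an agent that fails to send messages required by the protocol for sequence number $s$ is excluded by every retaliating counterpart for every future sequence number $s' > s$ within that RRB instance. Under the context described alongside Assumption \ref{ass:prefer}---namely, a money-transfer application in which ongoing participation yields ongoing utility---the unbounded horizon of retaliation strictly outweighs the one-off savings of suppressing a single ``ready'' message. The no-collusion stipulation from Section \ref{sec:model} rules out any joint abstention by which a subset of rational agents might shelter each other from retaliation.

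The step I expect to be most delicate is the non-boundary case, specifically the quantitative comparison between the retaliation penalty and the communication cost of participating. I would avoid an explicit utility computation by observing that retaliation persists over an unbounded tail of subsequent sequence numbers in the same instance while abstention saves only a single message per sequence number, so under the preferences specified by Assumption \ref{ass:prefer} the expected loss from retaliation strictly exceeds any free-riding gain. Combining the boundary pivotality argument with the retaliation argument then yields the corollary.
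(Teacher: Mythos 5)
Your first case (the pivotality argument in the worst-case Byzantine profile) is essentially the paper's entire proof: the paper argues that when the maximal number of Byzantine agents abstain, a single rational agent's abstention makes the $(2t+1)$ quorum unachievable, so that agent is pivotal. However, the paper does not split on $N = 3t+1$ versus $N > 3t+1$; it handles the non-pivotal executions by treating the worst case \emph{probabilistically} --- it only needs the worst case to occur with nonzero probability, and then argues the reward contemplated in Assumption~\ref{ass:prefer} can be scaled so that (probability of being pivotal) $\times$ (reward) exceeds the cost of participating, with a risk-aversion assumption (agents overweight the worst case) invoked to keep the required reward from being prohibitively large. Your proposal omits both the tunable-reward expected-utility comparison and the risk-aversion step.

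The genuine problem is your second branch. Invoking Definition~\ref{def:retaliation} to prove Corollary~\ref{cor:prefer} inverts the paper's logical architecture: retaliation is the mechanism analyzed in Lemma~\ref{lem:rrb}, and that lemma's punchline is precisely ``$p_{\mathit{bad}}$ will be unable to participate in future quorums, and \emph{by Corollary~\ref{cor:prefer}} no rational agent wants that.'' The only reason retaliation hurts is that it excludes the offender from future $(2t+1)$ quorums --- but the undesirability of that exclusion is exactly the preference the corollary is supposed to establish. So your non-boundary argument either presupposes its conclusion or silently substitutes the content of Lemma~\ref{lem:rrb} (``no incentive to deviate'') for the corollary's actual claim (``prefer quorum membership''), which are distinct statements with distinct roles in the paper. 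To repair it, drop the retaliation appeal entirely and close the non-pivotal gap the way the paper does: condition on the nonzero-probability event that the agent is pivotal, and argue the reward can be chosen large enough (and that risk-averse agents weight this event heavily enough) that expected utility from participating dominates the savings from abstaining, uniformly in $N$.
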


\begin{proof}
This is due to a ``fear of missing out'' on successful quorums. 
This is because in the worst case (i.e., when there are $t = \lceil \frac{N}{3} \rceil - 1$ Byzantine agents who choose not to participate in the RRB instance), a rational agent's choice to abstain from participating in a broadcast will make a $(2t+1)$-sized quorum impossible to achieve. Suppose the worst case occurs with nonzero probability.
During any execution of Rational Reliable Broadcast, a rational agent is tasked with reasoning about whether it is better to abstain to lessen the costs of communication or join the quorum to ensure they receive positive utility. 
In the worst case, a rational agent is pivotal in ensuring a quorum occurs, so a reward in the context of Assumption \ref{ass:prefer} can be made large enough such that the expected utility, when considering the probability of the worst case outcome, is larger than any costs against participating in the quorum. This gives the result.

Furthermore, suppose we impose an assumption similar one used in \cite{aiyer2005bar}, which assumes that rational agents assign high weight to the probability of worst case behavior (i.e. rational agents are \emph{risk averse}). This would be because they value the benefits of the service and view the risk of service failure as unacceptable, which is not far-fetched in infrastructural technologies such as monetary systems\footnote{Moreover, notice that behavior such as the formation of mining pools (see \cite{miningpools}) in cryptocurrencies suggests that participants in distributed monetary systems are naturally risk averse, as this widespread strategy is done specifically to reduce risk.}. Then this precludes the quantity of a suitable reward from being prohibitively large.
\end{proof}

We now show that rational agents behave according to the protocol in RRB.
\begin{lemma}\label{lem:rrb}
No rational agent has a unilateral incentive to deviate from the RRB protocol.
\end{lemma}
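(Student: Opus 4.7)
The plan is to fix an arbitrary rational agent $p_i$, assume that every other rational agent complies with Algorithm \ref{algo:rrb}, and show that no unilateral deviation by $p_i$ strictly improves its expected utility. I would proceed by a case analysis over the classes of observable deviation permitted by the protocol: (i) withholding or indefinitely delaying a required \emph{initial}, \emph{echo}, or \emph{ready} message; (ii) violating the queue ordering on outgoing messages; and (iii) emitting spurious, duplicate, or content-inconsistent messages for a sequence that has already been acted upon.

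For class (i), I would invoke Definition \ref{def:retaliation} directly. Once $p_i$ suppresses its message for some sequence $s$, every other rational agent's outgoing queue in this RRB instance blocks on $p_i$ for all $s' > s$, and those agents stop routing \emph{echo} and \emph{ready} messages to $p_i$ until the missing message arrives. Consequently $p_i$ cannot assemble a $(2t+1)$-sized quorum of incoming \emph{ready} messages for any subsequent sequence in this instance, so by Corollary \ref{cor:prefer} its utility strictly decreases: the one-time saving from suppressing a single outgoing message is dominated by the stream of forfeited quorum payoffs in the same instance, where the reward was already calibrated in the proof of Corollary \ref{cor:prefer} to exceed the per-message cost. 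For class (ii), out-of-order outgoing messages are indistinguishable to receivers from withheld ones, because the receiver's incoming queue blocks them until the lower sequences are processed; so either $p_i$ eventually emits the missing lower sequences (in which case reordering bought nothing and incurred extra bandwidth) or it never does, which reduces to class (i). For class (iii), the duplicate-suppression checks in the flushing routine simply discard the extraneous traffic, so $p_i$ pays communication cost with no compensating change in which quorums form.

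The step I expect to be the main obstacle is ruling out a \emph{delayed-defector} strategy, in which $p_i$ temporarily withholds messages to save cost and then releases them just in time to lift retaliation before a quorum it values forms. To handle this I would argue that in the asynchronous model the broadcaster's initiation schedule and the delivery times of other agents' messages are adversarial, so from $p_i$'s vantage point the probability of missing a pivotal quorum while under retaliation is strictly positive; by the risk-averse reading of Corollary \ref{cor:prefer}, the calibrated quorum reward dominates the expected bandwidth savings, so the delayed-defector strategy is still strictly dominated by compliance. Combining the three cases, no single-agent deviation yields higher expected utility than following Algorithm \ref{algo:rrb}, which gives the lemma.
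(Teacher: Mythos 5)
Your proposal is correct and follows essentially the same route as the paper's proof: it splits deviations into spurious/duplicate messages (neutralized by the duplicate-suppression checks and the quorum-intersection argument inherited from Bracha) and message withholding (deterred by the retaliation policy of Definition \ref{def:retaliation} combined with the quorum preference of Corollary \ref{cor:prefer}). Your explicit treatment of the delayed-defector strategy and of out-of-order sending goes slightly beyond the paper, which leaves those sub-cases implicit in its worst-case appeal to Corollary \ref{cor:prefer}, but the decomposition and the key lemma invoked are the same.
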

\begin{proof}
Fix a RRB instance $B$, initiating agent $p^*$, and sequence number $s$. First notice that no agent has any incentive to send extra \emph{echo} and \emph{ready} messages: Indeed lines \ref{algoline:ignore1} and \ref{algoline:ignore2} ensure correct agents ignore such behavior in a single sequence, so conditioning on the other rational agents following the protocol, no agent can gain anything by sending extra \emph{echo} and \emph{ready} messages. Similar logic applies for sending multiple \emph{initial} messages for a single sequence, as analysis in \cite{bracha1987asynchronous}, and given intuitively in the description of the algorithm, show that given the other rational agents are behaving correctly, only one of the initial messages will be accepted by the rational agents. Therefore we are interested in showing that rational agents cannot benefit by abstaining from forwarding messages in the protocol (i.e. we prove there are no \emph{free riders}).

Let $p_{\emph{bad}}$ be a deviating agent which does not send messages to agent $p_\emph{victim}$ for the reliable broadcast initiated by $p^*$ for sequence $s$. By the retaliation policy, $p_\emph{victim}$ will refuse to send messages to $p_\emph{bad}$ about broadcasts from initiator $p^*$ with sequence number greater than $s$. In the worst case, $p_\emph{bad}$ will be unable to participate in quorums of $2t+1$ ``ready'' signals on any future transaction from $p^*$. By Corollary \ref{cor:prefer} any rational agent would prefer to not be in this situation. 
Therefore a rational agent has no unilateral incentive to deviate from the RRB protocol.
\end{proof}
\begin{remark}
When we argued Corollary \ref{cor:prefer} , we pointed out that in the worst case a rational agent $p_\emph{bad}$ choosing to abstain from participation in the RRB protocol could lead to a ``stalled'' broadcast instance for the initiating agent $p^*$. In this case the retaliation other agents enact on $p_\emph{bad}$ is in some ways giving $p_\emph{bad}$ control over the broadcast capabilities of $p^*$, and $p_\emph{bad}$ could ransom these capabilities to force $p^*$ to pay them to un-stall their channel. Under the assumption that no externalities, such as extra communication channels, are present among rational agents, this attack is impossible. It would be interesting future work to examine what dynamics may arise when such externalities are present.
\end{remark}
Now we can prove the protocol is correct.
\begin{theorem}\label{thm:rrb}
The Rational Reliable Broadcast protocol is correct, i.e. it satisfies the \emph{validity}, \emph{agreement}, and \emph{integrity} properties.
\end{theorem}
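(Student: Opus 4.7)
The plan is to leverage Lemma \ref{lem:rrb} to reduce the analysis of RRB to a (slightly augmented) analysis of Bracha's BRB. Since no rational agent has a unilateral incentive to deviate, we may assume that under the Nash equilibrium all rational agents execute the algorithm faithfully. The proof then proceeds by verifying each of the three properties in turn, borrowing Bracha's arguments wherever the retaliation machinery is inactive, and separately arguing that retaliation cannot obstruct progress among rational agents.

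First I would dispatch \textbf{Integrity}. Suppose for contradiction that some rational agent delivers two distinct messages $m \neq m'$ initiated by $p^*$ with the same sequence number $s$. Delivery requires collecting $2t+1$ matching \emph{ready} messages, and a rational agent only sends a \emph{ready} after seeing $(n+t)/2$ matching \emph{echo} messages (or $t+1$ matching \emph{ready} messages). A counting argument identical to Bracha's shows that two distinct echo-quorums of size $(n+t)/2$ must overlap in a rational agent, which then would have sent \emph{echo} messages for two different contents for sequence $s$, contradicting the single-echo guard (line \ref{algoline:ignore1} in Algorithm \ref{algo:rrb}) and Lemma \ref{lem:rrb}. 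The ``previously broadcast'' clause for a rational initiator follows because echo amplification cannot start without either an \emph{initial} message from $p^*$ or sufficient \emph{echo}/\emph{ready} messages that transitively trace back to one.

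Next I would handle \textbf{Validity}. A rational initiator $p^*$ sends \emph{initial} messages to all agents. Each of the at least $2t+1$ rational recipients (including $p^*$) will, by Lemma \ref{lem:rrb}, send an \emph{echo} upon receipt. These $2t+1 \geq (n+t)/2$ echoes suffice for every rational agent to send a \emph{ready}, yielding $2t+1$ readys at $p^*$, triggering delivery. The subtlety here is the queueing/retaliation layer: I need to argue that no rational agent is stuck retaliating against $p^*$ or against any other rational agent on this instance. Since rational agents are assumed to follow the protocol for every prior sequence number, no message at a lower sequence number is missing from any rational-to-rational channel, so the outgoing queues on rational agents flush freely and the incoming echo/ready queues process the sequence-$s$ messages without indefinite delay.

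\textbf{Agreement} then follows by the standard amplification argument, updated for retaliation. If a rational agent $p$ delivers $m$, it collected $2t+1$ \emph{ready} messages for $m$, at least $t+1$ of which came from rational agents. Every other rational agent $q$ eventually receives these $t+1$ rational \emph{ready} messages (channels are reliable), triggering $q$ to send its own \emph{ready} for $m$ if it has not already, and any rational agent that had not yet echoed will now send an \emph{echo} once the $(n+t)/2$ echo threshold is reached via the usual propagation. As in the Validity step, retaliation does not block this cascade between rational agents because each of them has honored every lower sequence, so eventually all rational agents accumulate $2t+1$ readys for $m$ and deliver it. The main obstacle throughout is precisely this interaction between retaliation and liveness — ensuring that the queueing discipline introduced on top of Bracha's protocol never indefinitely withholds a message that a rational agent is owed — and I would phrase that step carefully as an inductive argument on sequence number $s$, with the induction hypothesis being that validity and agreement already hold for all $s' < s$, so no rational-to-rational retaliation is ever active at sequence $s$.
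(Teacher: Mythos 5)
Your proposal is correct and takes essentially the same route as the paper: invoke Lemma \ref{lem:rrb} to conclude that all rational agents follow the protocol, then inherit Bracha's validity, agreement, and integrity arguments while checking that the queueing and retaliation machinery is inert on rational-to-rational channels. You spell out the quorum-intersection counting and add an explicit induction on sequence numbers to show retaliation is never active among rational agents, whereas the paper simply cites Bracha's Theorem 1 and Lemma 4 and asserts the queuing system has no impact; the substance is the same.
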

\begin{proof}
The only changes to the BRB protocol given in \cite{bracha1987asynchronous} are the queuing and retaliation systems. 
As we assume reliable (but still asynchronous) communication, the queuing system should then have no impact on the activities of the rational agents. 
Likewise the retaliation system will have no impact on the activities between rational agents, as Lemma \ref{lem:rrb} shows that rational agents behave correctly. 
Therefore each of the properties is inherited from \cite{bracha1987asynchronous}. In particular, \emph{validity} and \emph{agreement} are shown by property 1, which is proven in Theorem 1 and Lemma 4 in \cite{bracha1987asynchronous}. \emph{Integrity} simply follows by Algorithm \ref{algo:rrb}, as duplicate messages are ignored and the delivery of a message can only happen once for each sequence number. 
Additionally, \ref{line:echoreadycond} of Algorithm \ref{algo:rrb} states that a correct agent must see more \emph{echo} or \emph{ready} messages than there are Byzantine agents, meaning only messages correctly initiated by the sender are able be delivered.
\emph{Integrity} follows. 
\end{proof}

\section{The Hearsay Protocol}
\label{sec:hearsay}

In this section, we describe the Hearsay protocol and provide pseudo-code in Algorithm \ref{algo:protocol}. 
Hearsay operates with $N$ fully-connected and uniquely identified agents, and each agent maintains a public-private key pair for signing and authenticating messages. 
The only messages that are broadcast in Hearsay are transactions messages, in which the agent initiating the broadcast specifies an amount of money to be transferred to a single recipient agent.
Additionally, the initiating agent must pay a fixed transaction fee $\epsilon$ to each other agent in the system upon execution of a transaction. 

The state of the system is replicated across each agent in the form of local variables $\mathcal{B}$, a vector of balances for each agent, and $\mathcal{S}$, a list of agent sequence numbers which allow the outgoing transactions of each agent to be totally ordered. 
Additionally, each agent maintains a set of incoming payment buffers and a set of fee buffers, which contain incoming payments and fees, respectively, for each agent which have not yet been credited to their balance. 
When an agent initiates a new transaction, they can specify the incoming payments and fees which must be added to their balance prior to executing the transaction. 

\subsection{Rational Reliable Broadcast in Hearsay}
Hearsay makes use of $N$ instances of the RRB protocol to enable broadcasts between agents. 
All agents participate in all of the RRB instances, but each agent is only allowed to initiate a broadcast in a single instance.

In order to maintain modularity between Hearsay and the underlying broadcast primitive, we introduce the following vocabulary. 
An agent \textit{broadcasts} a transaction if they construct the transaction, sign it, and send it to all other agents.
An agent $p$ \textit{receives} a transaction if it is sent to them during execution of an RRB instance.
After achieving a sufficient quorum on a received transaction to satisfy the broadcast primitive, $p$ then \textit{delivers} the transaction by adding it to a buffer containing transactions to be executed after a set of conditions are satisfied.
Once the required conditions are met, $p$ \textit{executes} a delivered transaction by running the \textbf{Execute} function.
Finally, if $p$ determines that the transaction is not a \emph{bad} transaction, then $p$ \emph{commits} the transaction by updating the balances of the initiator and recipient accordingly.

This modularity allows messages to be delivered in each RRB instance without any constraints from the Hearsay protocol while ensuring that transactions are executed in accord with certain dependency conditions (as specified in Algorithm \ref{algo:protocol}).

\subsection{Protocol Description}

Money transfer begins when an agent constructs a transaction with the \textbf{Pay} function. 
A transaction consists of the following elements: $1$) the index of a source agent $i$ from which money will be deducted, $2$) the index of a destination agent $j$ which will receive the money, $3$) an expenditure amount $x$ that will be transferred from agent $i$ to agent $j$, $4$) the sequence number of agent $i$'s outgoing transaction, and $5,6$) a set of recent payments and fees received by agent $i$ which are used to verify that $i$'s balance is sufficient to make the transfer. 

After composing a transaction, $i$ broadcasts it using the RRB primitive.
Each agent delivers any transaction for which they receive a RRB quorum. 
In the case that a transaction never achieves a RRB quorum, it is clear by the Validity property of RRB that the initiator of the transaction must be Byzantine, and the transaction therefore does not need to be delivered.

Upon delivering a transaction, an agent places the transaction into a buffer until the following conditions are met.
The first condition is that, in order for an agent to execute a transaction with initiator $p$ and sequence number $s$, their local sequence number for initiator $p$ must be $s-1$. 
This ensures that any correct agent executes all transactions from $p$ in order of sequence number.
The second condition which must be met before a transaction can be executed is that all of its dependent transactions must be executed. 
Because communication is asynchronous and agents may execute transactions in different order, the second condition is imposed to guarantee that any money received by the initiator of a transaction which is required for the initiator to pass a balance check is indeed added to their balance before their transaction is executed. 
The third and final condition which must be met before a transaction may be executed is that the initiator of the transaction must either have a balance of $N \epsilon$ prior to the execution of the transaction or the transaction must convert at least $N$ fee credits into balance. 
This condition guarantees that no agent can achieve a negative balance (i.e., go into debt). 
Without checking the third condition prior to execution, any agent with a balance of less than $N \epsilon$ which initiates a transaction will end up with a negative balance, as fees are deducted from the initiator's balance regardless of whether or not the transaction is committed. 

Agents execute a transaction by updating their local variables according to the \textbf{Execute} function. 
While executing a transaction, each agent must verify that the agent which initiated the transaction has sufficient balance to cover both the specified payment and the required transaction fees.
If an agent executing a transaction determines that the initiating agent does not have sufficient balance (i.e., the transaction overdrafts), the transaction is marked as \emph{bad}. 
Additionally, any transaction which references a bad transaction in its dependency set (meaning a bad transaction is included in the set \emph{tx.deps}) is also marked as bad. 
The punishment for a bad transaction is that the transaction fee is transferred from the balance of the initiating agent to the balance of each other agent, but the payment specified by the transaction is not transferred from the initiator to the recipient. 
In other words, the fee is charged as a punishment but the transaction itself does not occur. 
If an agent decides that a transaction is not bad, then they reduce the balance of the initiator and add the transaction to the recipient's incoming payment buffer via the \textbf{Commit} function. 


\begin{algorithm}[t]
\scriptsize
\DontPrintSemicolon
\SetKwInOut{Input}{input~}
\SetKwInOut{Local}{Local variables~}
\SetKwInOut{Output}{output~}
\DontPrintSemicolon
\Local{$N$ :: the number of agents in the system \\
$\epsilon$ :: the transaction fee \\
$i \in \{1, \dots, N\}$ :: my agent number \\
$F = \{\}^N$ :: set of $N$ fee buffers \\
$Q = \{\}^N$ :: set of $N$ incoming payment buffers \\
$B \in \mathbb{R}^N$ :: agent balances, initially all greater than $N \epsilon$  and agreed upon by all agents \\
$S = 0^N$ :: agent sequence numbers, initially 0 \\
$\emph{ctr} = 1$ :: internal transaction counter, initially 1 \\
}
\vspace{\baselineskip}
\textbf{Pay}(Recipient $j$, Amount $x$, Fee Conversion Flag $f$): \;
\algotab \uIf{$B[i] < x + N \epsilon$\label{hearsay:check_my_balance}\label{algoline:fee_sufficient}}{
\algotab \textbf{exit}
}
\algotab \uIf{f = true\label{algoline:fee_flag_check}}{
\algotab \emph{fees} $\leftarrow F[i]$ \;
\algotab $F[i] \leftarrow \{\}$\;
}
\algotab \uElse{
\algotab \emph{fees} $\leftarrow \{\}$ \;
}
\algotab (\emph{tx}.\emph{initiator}, \emph{tx}.\emph{recipient}, \emph{tx}.\emph{value}, \emph{tx}.\emph{seq}, \emph{tx}.\emph{deps}, \emph{tx}.\emph{fees}) $\leftarrow (i, j, x, \emph{ctr}, Q[i], \emph{fees})$ \label{hearsay:assign_tx}\label{algoline:tx_construction}\;
\algotab $Q[i] \leftarrow \{\}$ \label{hearsay:assign_I}\label{algoline:clear_tx_buffer}\;
\algotab $\emph{ctr} \leftarrow \emph{ctr}  + 1$ \label{hearsay:assign_c}\;
\algotab \textbf{RRB}(\emph{tx}) \;
\;

On \textbf{RRB-Deliver}(\emph{tx}): \;
\algotab add \emph{tx} to execution buffer \label{hearsay:buffer}\;
\;

\textbf{Condition 1:} $S$[\emph{tx}.\emph{initiator}] = \emph{tx}.\emph{seq} - 1 \label{hearsay:cond1}\;
\textbf{Condition 2:} $\forall t \in$ \emph{tx.deps} $\cup$ \emph{tx.fees}: $t$ executed \label{hearsay:cond2}\;
\textbf{Condition 3:} $(B[\emph{tx.initiator}] > N \epsilon) \lor (|\emph{tx.fees}| > N)$ \label{hearsay:cond3}\;

for transaction satisfying Conditions 1 and 2 \textbf{Execute}(\emph{tx}): \;
\algotab \emph{tx}.\emph{bad} $\leftarrow$ \emph{false} \;
\;

\algotab \For{$t \in \text{tx.deps}$}{\label{hearsay:start}
\algotab \uIf{$t \in Q[\text{tx.initiator}]$}{
\algotab $B[\emph{tx.initiator}] \leftarrow B[\emph{tx.initiator}] + \emph{t.value}$ \label{hearsay:add_payments}\;
\algotab remove t from Q[\emph{tx.initiator}] \;
}
}
\;

\algotab \For{$t \in \text{tx.fees}$}{
\algotab \uIf{$t \in F[\text{tx.initiator}]$}{
\algotab $B[\emph{tx.initiator}] \leftarrow B[\emph{tx.initiator}] + \epsilon$ \label{hearsay:add_fees}\;
\algotab remove t from F[\emph{tx.initiator}] \;
}
}
\;

\algotab \uIf{($B$[tx.initiator] $< x + N\epsilon$) $\lor$ ($\exists t \in$ tx.deps : t.bad = true \label{hearsay:bad_cond}\label{algoline:bad_tx_check})}{
\algotab \emph{tx}.\emph{bad} $\leftarrow$ \emph{true} \label{hearsay:bad_assign}
}

\; 

\algotab \uIf{tx.bad = false}{
\algotab \textbf{Commit}(\emph{tx})
}
\; 

\algotab $B$[\emph{tx}.\emph{initiator}] $\leftarrow$ $B$[\emph{tx}.\emph{initiator}] $- N \epsilon$ \algotab (fees occur whether or not the transaction is bad) \label{hearsay:subtract_fees}\;
\algotab \For{$j \in \{1, \dots, N\}$}{
\algotab $F[j] \leftarrow F[j] \cup tx$ \label{hearsay:add_feedependency}\label{algoline:fee_awarding}\;
}
\;

\algotab $S$[\emph{tx}.\emph{initiator}] $\leftarrow$ $S$[\emph{tx}.\emph{initiator}] $+ 1$ \label{hearsay:update_sequence}\;
\;

\textbf{Commit}(tx): \\
\algotab $B$[\emph{tx}.\emph{initiator}] $\leftarrow$ $B$[\emph{tx}.\emph{initiator}] - \emph{tx.value} \label{hearsay:subtract_payment}\;
\algotab $Q[\emph{tx.recipient}] \leftarrow Q[\emph{tx.recipient}] \cup \emph{tx}$ \label{hearsay:add_txdependency}\;
\caption{Hearsay}
\label{algo:protocol}
\end{algorithm}

\subsection{Transaction Fees and Balance Updates}
For each transaction executed in Hearsay, the initiator of the transaction must pay a fixed fee $\epsilon$ to each agent in the system in return for executing the transaction. 
As we describe in Section \ref{sec:incentives}, this fee is the main incentive by which rational agents are encouraged to follow the Hearsay protocol.
Due to the asynchronicity of communications in the system, the order in which payments and fees are credited to agent balances must be strictly defined in order to ensure that any two agents executing the same transaction are able to agree on the balance of that transaction's initiator.

When a correct agent $p$ executes a transaction, $p$ deducts $N \epsilon$ directly from the balance of the initiator of the transaction. 
However, rather than immediately incrementing the balance of every agent in the system by $\epsilon$, $p$ instead adds a fee \textit{credit} to a fee buffer for each agent (including $p$ itself). 
In order for $p$ to convert their fee credits into a spendable balance, $p$ must signal a fee conversion to occur with the next transaction they initiate. 
Specifically, $p$ must reference a set of recent transactions for which $p$ has received a fee credit within their next transaction $tx$.
This signals to each executing agent to treat all fees referenced by $tx$ as dependencies which must be executed prior to $tx$.
Then, upon execution of $tx$, each of the fee credits referenced is converted into balance for $p$. 

Incoming payments to $p$ are treated similarly to fees. 
If $p$ is the recipient of a transaction, each agent executing the transaction adds the transaction value to the incoming payment buffer for $p$ instead of immediately adding it to $p$'s balance. 
Each time $p$ initiates a transaction $tx$, they reference the set of incoming payments they have received since their last outgoing transaction. 
These are also treated as dependencies  which must be executed prior to $tx$, and the balance of each incoming payment referenced is added to $p$'s balance upon execution of $tx$.

If fees and incoming payments were added immediately to the recipient's balance, then two agents which have executed different sets of transactions which are not dependencies of $tx$ might have added different fees and payments to $p$'s balance prior to executing $tx$.
As a result, these two agents might disagree on whether or not to commit $tx$, breaking the agreement property.

We also remark that in general, a transaction which signals a fee conversion is executed more slowly than one which does not, as a fee conversion adds many additional dependencies which must be executed prior to the transaction.
For this reason, fee conversions should be treated as rare events.
Although an agent may choose to signal a fee conversion at any time, it makes sense for a rational agent to wait until they need the balance for a future transaction or they know that they will not need to initiate any transactions in the near future.

\subsection{Protocol Properties}
\label{sec:hearsay_properties}
Let Hearsay be the protocol characterized by the pseudo-code in Algorithm \ref{algo:protocol}. 
In this section we show that Hearsay satisfies the following properties.
Let the set of rational agents be $\mathcal{R}$.

\vspace{2mm}
\noindent \textbf{Validity}. If a rational agent $p$ broadcasts a transaction \emph{tx}, then $p$ eventually executes \emph{tx}.\\

\textit{Formally}. 
For transaction \emph{tx}:

\begin{equation}
    \forall p \in \mathcal{R}, p \textbf{ broadcasts } tx \rightarrow p \textbf{ executes } tx
\end{equation}

\noindent \textbf{Agreement}. If a rational agent $p$ executes a transaction \emph{tx}, then all rational agents eventually execute \emph{tx}. 
Moreover, if $p$ commits $tx$, then all rational agents commit $tx$.

\textit{Formally}. 
For a transaction \emph{tx}:

\begin{equation}
\begin{split}
    \big((\exists p \in \mathcal{R} : p \textbf{ executes } tx) &\rightarrow (\forall q \in \mathcal{R}, q \textbf{ executes } tx)\big)\\
    \land \big((\exists p \in \mathcal{R} : \textbf{ commits } tx) &\rightarrow (\forall q \in \mathcal{R}, q \textbf{ commits } tx)\big)
\end{split}
\end{equation}

\noindent \textbf{Integrity}. For any initiator $p$ and sequence number $s$, any rational agent executes at most one transaction with initiator $p$ and sequence number $s$.\\

\textit{Formally}. 
Let $tx_1, tx_2$ be any two transactions. 

\begin{equation}
\begin{split}
    &(\exists p \in \mathcal{R} : p \textbf{ executes } tx_1 \land p \textbf{ executes } tx_2) \rightarrow (tx_1.\emph{initiator} \neq tx_2.\emph{initiator} \lor tx_1.\emph{seq} \neq tx_2.\emph{seq})  \\
\end{split}
\end{equation}


In order to characterize agent behavior in a setting where a rational agent may choose to deviate from Hearsay at any time, we introduce the notion of past-correctness. 
However, for the sake of clarity, we simply call an \textit{correct} rather than past-correct.

\begin{definition}[Past-correct]
An agent $p$ is said to be past-correct at an event $e$ iff at all local events before and including $e$, $p$ has acted according to Hearsay exactly. 
\end{definition}

We now introduce the first major theorem of the paper, which states that Hearsay satisfies the properties of Validity, Agreement, and Integrity if all rational agents behave correctly. 
In Section \ref{sec:incentives}, we introduce the other major theorem of the paper, which states that all rational agents behave correctly. 

\begin{theorem} Hearsay satisfies the properties of Validity, Agreement, and Integrity if no rational agent deviates from the protocol specification.
\label{thm:properties}
\end{theorem}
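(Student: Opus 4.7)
The plan is to handle each of the three properties in turn, using the fact that all rational agents follow Algorithm \ref{algo:protocol} exactly and leveraging the \emph{Validity}, \emph{Agreement}, and \emph{Integrity} guarantees of RRB (Theorem \ref{thm:rrb}). The unifying technical lemma I would isolate first is a \emph{state-determinism} claim: for any transaction $tx$, at the moment any rational agent $p$ begins \textbf{Execute}($tx$), the values $B[tx.\emph{initiator}]$, $Q[tx.\emph{initiator}]$, and $F[tx.\emph{initiator}]$ restricted to entries relevant to $tx$ are identical across all rational agents. This follows because $B[tx.\emph{initiator}]$ is only mutated by \textbf{Execute} calls on transactions whose initiator is $tx.\emph{initiator}$ (and by the initial balance), and Condition~1 forces these to be executed in sequence-number order; likewise, the deps and fees consumed during \textbf{Execute}($tx$) are exactly those referenced in the $tx$ message itself, which are identical on the wire since $tx$ is RRB-delivered identically to all rational agents.

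\medskip
\emph{Integrity} is the shortest argument. By RRB integrity applied to each of the $N$ broadcast channels, every rational agent RRB-delivers at most one transaction per $(initiator, seq)$ pair. A delivered transaction is placed in the execution buffer at line~\ref{hearsay:buffer} and is executed only when Condition~1 ($S[tx.\emph{initiator}] = tx.\emph{seq} - 1$) is met, after which line~\ref{hearsay:update_sequence} increments $S[tx.\emph{initiator}]$. Thus each $(initiator, seq)$ pair is executed at most once, giving integrity.

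\medskip
For \emph{Validity}, suppose a rational agent $p$ broadcasts $tx$ via \textbf{Pay}. By RRB validity, $p$ eventually RRB-delivers $tx$, so $tx$ enters $p$'s execution buffer. I would then show that Conditions~1, 2, and 3 are eventually satisfied at $p$. Condition~1 holds by induction on sequence numbers: $p$'s prior self-initiated transactions are, by the same argument, eventually executed, and each increments $S[p]$. Condition~2 holds because $tx.\emph{deps}$ and $tx.\emph{fees}$ consist of transactions that have already been RRB-delivered and added to $Q[p]$ or $F[p]$ at $p$ before the call to \textbf{Pay}, so by RRB agreement and induction they are each executed. Condition~3 follows from the guard at line~\ref{hearsay:check_my_balance} of \textbf{Pay}: $p$ only proceeds if $B[p] \geq x + N\epsilon$, and since $p$'s own $B[p]$ value is updated deterministically through its past executions, the condition either $B[p] > N\epsilon$ or $|tx.\emph{fees}|>N$ will hold at execution time.

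\medskip
\emph{Agreement} is the main obstacle, and splits into the execution part and the commit part. For the execution part, I would induct on a well-founded order over transactions (e.g., the lexicographic pair $(depth, seq)$, where depth captures the dependency chain): if a rational agent $p$ executes $tx$, then RRB agreement gives every other rational agent $q$ an eventual delivery of $tx$, and the inductive hypothesis plus the argument for Validity above shows that $q$ eventually satisfies all three conditions and executes $tx$. The commit part is where the state-determinism lemma is essential: once $q$ reaches the bad-check line~\ref{hearsay:bad_cond}, the value of $B[tx.\emph{initiator}]$ is identical to the value $p$ saw there (by the lemma), and the set $\{t \in tx.\emph{deps} : t.\emph{bad} = true\}$ is also identical (by induction, since all of these dependencies are earlier in the well-founded order and thus already agreed upon). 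Hence $p$ and $q$ make the same bad/not-bad decision and either both call \textbf{Commit} or neither does. The hardest bookkeeping step in this plan is verifying the state-determinism lemma carefully across asynchronous interleavings, and in particular confirming that entries of $Q[tx.\emph{initiator}]$ and $F[tx.\emph{initiator}]$ consumed by \textbf{Execute}($tx$) are never prematurely removed by some other concurrent \textbf{Execute}; this reduces to the observation that these two buffers are only modified by self-initiated transactions of $tx.\emph{initiator}$, which are totally ordered by Condition~1.
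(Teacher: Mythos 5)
Your proposal is correct and follows essentially the same route as the paper: the paper likewise splits the theorem into three lemmas, derives Validity by induction on sequence numbers from RRB validity and the \textbf{Pay} guards, derives Agreement by induction on a dependency-depth measure (the paper's height $h(tx)$ over the set $\Phi(tx)$) using exactly your state-determinism observation (formalized as the fact that $p.\Delta(tx)[j]=0$ for $j\neq tx.\emph{initiator}$), and derives Integrity from RRB integrity. The only quibble is your closing claim that $Q[tx.\emph{initiator}]$ and $F[tx.\emph{initiator}]$ are ``only modified by self-initiated transactions'' --- entries are \emph{added} by executions of other agents' transactions (lines \ref{hearsay:add_feedependency} and \ref{hearsay:add_txdependency}); the claim is accurate only for \emph{removals}, which is all your argument actually needs.
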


\vspace{-1mm}
\noindent The intuition behind the Theorem is that, if we assume that rational agents behave correctly, if a rational agent broadcasts a transaction then eventually all rational agents execute it and update their balances in precisely the same way. 
We prove Theorem \ref{thm:properties} by proving the next three lemmas. 
We defer the formal proofs of each lemma to Appendix \ref{app:defproofs} and provide more intuitive proof sketches in this section.

\begin{lemma} Hearsay satisfies Validity if no rational agent deviates from the protocol specification. \label{lem:validity}
\label{lem:1}
\end{lemma}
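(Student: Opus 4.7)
The plan is to show that once a past-correct rational agent $p$ invokes \textbf{Pay} and hence calls \textbf{RRB}(\emph{tx}), three separate things happen: $tx$ reaches $p$'s own execution buffer, the three guards for \textbf{Execute} eventually all hold, and therefore $p$ executes $tx$. The first step is immediate from Theorem~\ref{thm:rrb}: since $p$ is rational and has broadcast $tx$ via RRB, the Validity property of RRB guarantees that $p$ eventually RRB-delivers $tx$, at which point line~\ref{hearsay:buffer} places $tx$ in $p$'s execution buffer. The rest of the proof reduces to showing that Conditions 1, 2, and 3 of Algorithm~\ref{algo:protocol} all become true in finite time.

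For Condition 1 I would use induction on $tx.\emph{seq}$. Because $p$ is past-correct, its counter \emph{ctr} is only mutated in \textbf{Pay} (monotonically by one), so $p$'s transactions carry consecutive sequence numbers starting from $1$ and $S[i]$ starts at $0$. The inductive hypothesis applied to $p$'s transaction with sequence $s-1$ says that it is eventually executed, which drives $S[i]$ to $s-1$ as required. For Condition 2, I would observe that every entry in $tx.\emph{deps}$ was drained from $Q[i]$ on line~\ref{algoline:tx_construction}, and every entry of $tx.\emph{fees}$ from $F[i]$. Since $p$ is past-correct, entries can only enter $Q[i]$ inside \textbf{Commit} (which is only called from \textbf{Execute}) and entries can only enter $F[i]$ inside \textbf{Execute} itself (line~\ref{algoline:fee_awarding}); so every transaction that $tx$ depends on had already been executed by $p$ at the moment $tx$ was constructed, and hence long before $tx$ is ever delivered.

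For Condition 3 I would invoke the balance check on line~\ref{algoline:fee_sufficient}: past-correctness means $p$ only reached \textbf{RRB}(\emph{tx}) if $B[i] \geq x + N\epsilon$ at construction time, so in particular $B[i] > N\epsilon$ at that point. I would then argue that the value of $B[i]$ visible to past-correct $p$ at execution time is no smaller than the value required for Condition 3, because $B[i]$ is only ever mutated when $p$ executes one of $p$'s own transactions, Condition~1 forces those earlier executions to have already taken place, and past-correctness requires that the fee-conversion flag is set (producing $|tx.\emph{fees}| > N$) in the corner cases where the residual balance would otherwise drop below $N\epsilon$. Combining all three, $tx$ eventually clears \textbf{Execute}, yielding Validity.

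The main obstacle I expect is Condition 3, precisely because the balance check happens inside \textbf{Pay} against the snapshot of $B[i]$ at broadcast time, while the guard is re-evaluated at \textbf{Execute} time after potentially many of $p$'s other outgoing transactions have committed. Making this airtight will require an invariant linking $B[i]$ at \textbf{Pay} time to the balance that remains after all lower-sequence outgoing transactions of $p$ execute; the cleanest version of this invariant is that a past-correct $p$ never invokes \textbf{Pay} unless either the post-deduction balance still exceeds $N\epsilon$ or the transaction carries more than $N$ fee credits. Once that invariant is isolated, each of the three conditions becomes routine and the lemma follows.
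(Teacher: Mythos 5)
Your decomposition matches the paper's own proof essentially step for step: RRB Validity yields delivery into the execution buffer, Condition 1 is discharged by induction on $tx.\emph{seq}$ from the initial state $S[i]=0$, Condition 2 follows because everything placed into $tx.\emph{deps}$ and $tx.\emph{fees}$ was drained from $Q[i]$ and $F[i]$, which a past-correct agent only populates inside \textbf{Execute}, and Condition 3 is read off the guard on line~\ref{algoline:fee_sufficient}. The paper packages the same content as a single helper implication conditioned on $p.S[p] = tx.\emph{seq}-1$ followed by an induction over sequence numbers, so the route is the same.

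The one point worth flagging is your treatment of Condition 3, where you correctly identify a timing gap that the paper's proof silently steps over: the paper simply carries $p.B[p] \geq N\epsilon$ from the \textbf{Pay}-time check into the execution-time guard, ignoring that $p$ may issue several outgoing transactions before executing any of them, so the balance visible when Condition 3 is finally tested can be strictly smaller than the balance tested on line~\ref{algoline:fee_sufficient}. However, the invariant you propose to close the gap (that a past-correct $p$ never invokes \textbf{Pay} unless the post-deduction balance after all pending lower-sequence outgoing transactions still exceeds $N\epsilon$, or more than $N$ fee credits are attached) is not actually enforced by Algorithm~\ref{algo:protocol}: line~\ref{algoline:fee_sufficient} tests only the instantaneous $B[i]$, so two back-to-back \textbf{Pay} calls can both pass against the same undecremented balance and the second can then fail Condition~\ref{hearsay:cond3} forever. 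So as written your Condition 3 step is not airtight; either the argument must be restricted to executions in which an agent waits for its previous transaction to execute locally before paying again, or the stronger balance guard must be added to the protocol rather than derived from it. This is as much a gap in the paper's proof as in yours, but it is the one place your proposal does not yet close the loop.
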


\begin{proof}[Proof Sketch]
We first state that if a rational (and therefore correct) agent $p$ has executed all of their own transactions prior to $tx$, and $p$ broadcasts a transaction \emph{tx}, then $p$ will eventually deliver \emph{tx}.
We prove this statement with the following logic.
By the Validity of RRB, if $p$ broadcasts $tx$, then $p$ delivers $tx$.
By hypothesis, $p$ has executed all of their transactions prior to $tx$ (satisfying Condition 1).
By the Hearsay specification, if $p$ broadcasts $tx$, then they must necessarily have executed all transactions that they placed in \emph{tx.deps} (satisfying Condition 2).
And again by the Hearsay specification, if $p$ broadcasts $tx$, their balance must be greater than $N \epsilon$ (satisfying Condition 3).

The proof of the lemma then follows by simple induction.
Before $p$ has broadcast their first transaction $tx$, it trivially holds that $p$ has delivered all of their own transactions prior to $tx$. 
We then use induction to show that $p$ will also execute each transaction that they broadcast after $tx$. 
\end{proof}

\begin{lemma} Hearsay satisfies Agreement if no rational agent deviates from the protocol specification.\label{lem:agreement}
\end{lemma}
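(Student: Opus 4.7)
The plan is to use the Agreement property of Rational Reliable Broadcast (Theorem~\ref{thm:rrb}) to obtain universal delivery among rational agents, then to show by induction that the three execution conditions of Algorithm~\ref{algo:protocol} are eventually satisfied by every rational agent, and finally to argue that the commit decision is a deterministic function of local state at the moment of execution.

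First, suppose a rational agent $p$ executes a transaction $\emph{tx}$. Then $p$ previously delivered $\emph{tx}$ through the RRB instance owned by $\emph{tx.initiator}$, and by RRB Agreement every rational agent $q$ eventually delivers $\emph{tx}$ and places it in its execution buffer. I would then show that $q$ eventually satisfies Conditions 1--3 via induction on the well-founded partial order generated by per-initiator sequence ordering together with the dependency relation. For Condition 1, the transaction by $\emph{tx.initiator}$ with sequence $\emph{tx.seq}-1$ was executed by $p$ prior to $\emph{tx}$; by the inductive hypothesis $q$ also eventually executes it, so $S[\emph{tx.initiator}]$ reaches $\emph{tx.seq}-1$. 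For Condition 2, each $t \in \emph{tx.deps} \cup \emph{tx.fees}$ was likewise executed by $p$ before $\emph{tx}$, and the inductive hypothesis yields the same for $q$. Condition 3 reduces to the key invariant that $q$'s value of $B[\emph{tx.initiator}]$ immediately before executing $\emph{tx}$ agrees with $p$'s corresponding value, whence $p$'s satisfying Condition 3 forces $q$ to do so as well.

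For the commit half of Agreement, I would observe that the bad flag assigned to $\emph{tx}$ depends only on $B[\emph{tx.initiator}]$ at execution time and on the bad flags of transactions in $\emph{tx.deps}$. The first quantity coincides across rational agents by the invariant above, and the second is itself a deterministic function of each dependency's own execution, handled by the same induction. The main obstacle will be rigorously establishing the balance invariant: I must show that the only updates which can have modified $B[\emph{tx.initiator}]$ at any rational agent prior to executing $\emph{tx}$ come from $\emph{tx.initiator}$'s earlier outgoing transactions (forced to precede $\emph{tx}$ by Condition 1) together with the transactions listed in $\emph{tx.deps} \cup \emph{tx.fees}$ (forced to precede by Condition 2), and from no others. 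The crux is that incoming payments and fee credits sit in the $Q$ and $F$ buffers and never touch the balance until they are explicitly consumed via a dependency reference inside one of $\emph{tx.initiator}$'s own outgoing transactions, which pins the balance down as a deterministic function of $\emph{tx}$'s metadata alone. Once this invariant is verified by inspection of the \textbf{Pay}, \textbf{Execute}, and \textbf{Commit} routines, both halves of Agreement follow by routine bookkeeping over the induction above.
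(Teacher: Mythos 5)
Your proposal is correct and follows essentially the same route as the paper's proof: RRB Agreement for universal delivery, induction over the dependency-plus-sequence partial order (the paper makes this explicit via the set $\Phi(tx)$ and a height function $h(tx)$, which is just a rank for your well-founded order), and the central invariant that $B[\emph{tx.initiator}]$ is modified only by the initiator's own earlier transactions and the referenced dependencies, so all rational agents see the same balance and make the same commit decision. The balance invariant you flag as the main obstacle is exactly the paper's Equation (\ref{eq:only_initiator}) combined with the structure of $\Phi(tx)$, so no gap remains.
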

\begin{proof}[Proof Sketch]
We first define the set $\Phi(tx)$, which is constructed recursively on a transaction $tx$ with initiator $p$ to include $tx$, all transactions initiated by $p$ with sequence number less than or equal to $tx$, and any dependency of any transaction in $\Phi(tx)$.
Intuitively, if a correct agent $p$ executes $tx$ then $\Phi(tx)$ is the set of all transactions that $p$ \emph{must also} have executed.
We also define the height of $tx$ to be the length of the longest chain of transactions in the set $\Phi(tx)$. 
Intuitively, the height of $tx$ is the longest chain of dependencies a correct agent must execute prior to $tx$.

The proof uses an induction argument over transaction heights. 
It is clear that any transaction of height 1 has no dependencies and the initiator must have sufficient balance to cover the fees, so the transaction is executed immediately after it is delivered by each rational agent and all rational agents agree on whether or not to commit it. 
This forms the base case for induction. 
As an induction hypothesis, we assume that all transactions up to height $k$ have been executed by all rational agents and that all rational agents agree upon whether not to commit them.
The induction hypothesis implies that for some transaction $tx$ of height $k+1$, all transactions in $\Phi(tx)$ other than $tx$ must have been executed by all rational agents and that all rational agents have agreed upon whether not to commit them.
We combine this with the observation that the balance of the initiator of $tx$ can only be changed prior to the execution of $tx$ by a transaction in $\Phi(tx)$.
This implies that all rational agents must have performed the same updates to the balance of $\emph{tx.initiator}$ prior to the execution of $tx$, and therefore all rational agents agree on the balance. 
Finally, during the execution of $tx$, all rational agents are guaranteed to perform the same balance updates prior to the balance check because they are guaranteed to have executed all dependencies of $tx$ by the induction hypothesis. 
This necessarily means that all rational agents execute $tx$ and agree upon whether or not to commit it, completing the proof.

The intuition behind the proof is that if some rational agent executes $tx$, they must also have executed all transactions in $\Phi(tx)$. 
Then for any other rational agent which has not yet executed $tx$, there must be at least one transaction in $\Phi(tx)$ that they can execute.
Moreover, because the balance of any agent $p$ only changes during a transaction that they initiate, the order in which transactions not initiated by $p$ are executed does not affect the balance checks for $p$'s transactions. 
As a result, all rational agents execute the same transactions and agree upon whether or not to commit each executed transaction.
\end{proof}

\begin{lemma} Hearsay satisfies Integrity if no rational agent deviates from the protocol specification. \label{lem:integrity}
\label{lem:4}
\end{lemma}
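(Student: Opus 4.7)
The plan is to derive this directly from the Integrity property of the underlying broadcast primitive, combined with the fact that Execute is only ever invoked on transactions that have been RRB-Delivered. The argument has essentially two parts: showing that at most one transaction with a given (initiator, sequence) pair ever enters the execution buffer, and observing that nothing in Execute can produce a duplicate.

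First, I would invoke Theorem \ref{thm:rrb}, specifically the Integrity property of Rational Reliable Broadcast, which states that for any agent $p^*$ and sequence number $s$, every rational agent delivers at most one message initiated by $p^*$ with sequence number $s$. Since Hearsay runs $N$ separate RRB instances, one per initiator, applying this property to the RRB instance owned by $p$ guarantees that any fixed rational agent delivers at most one transaction with initiator $p$ and sequence number $s$. It is important to note that this bound holds for arbitrary $p$, whether $p$ is rational or Byzantine, so no case split on the initiator is required.

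Next, I would observe from Algorithm \ref{algo:protocol} that the only path by which a transaction enters the execution buffer is the handler at line \ref{hearsay:buffer}, which fires upon RRB-Deliver, and the Execute routine acts only on transactions drawn from that buffer. Consequently the at-most-one bound on deliveries carries over verbatim to executions: each rational agent executes at most one transaction with initiator $p$ and sequence number $s$, which is exactly the statement of Integrity.

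As a belt-and-suspenders observation, I would point out that Condition 1 (line \ref{hearsay:cond1}) together with the increment $S[\emph{tx.initiator}] \leftarrow S[\emph{tx.initiator}] + 1$ at line \ref{hearsay:update_sequence} provides an independent barrier: once a transaction with sequence $s$ from initiator $p$ has executed, $S[p]$ strictly exceeds $s-1$ and Condition 1 can never again be satisfied for that pair. There is no real obstacle to this proof — the only subtle point is matching the per-initiator scope of Hearsay's $N$ RRB channels to the scope of RRB's Integrity guarantee, but this alignment is built into the protocol by design, so the lemma falls out as a direct corollary of Theorem \ref{thm:rrb}.
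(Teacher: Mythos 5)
Your proof is correct and matches the paper's own argument: the paper likewise reduces Integrity to the RRB Integrity property via the observation that execution requires prior delivery through the handler at line \ref{hearsay:buffer}. The extra remark about Condition 1 and the sequence-number increment is a valid additional safeguard but not needed by the paper.
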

\begin{proof}[Proof Sketch]
The proof of this lemma follows almost directly from the Integrity of RRB. 
If a rational agent $p$ executes two transactions with the same initiator and sequence number, it must also be true that $p$ delivered both transactions. 
However, by the Integrity of RRB, $p$ will not deliver two different transactions with the same initiator and sequence number.
\end{proof}

It is clear that the proofs of Lemmas \ref{lem:validity}, \ref{lem:agreement}, and \ref{lem:integrity} constitute a proof of Theorem \ref{thm:properties}. 

\subsection{Implications of Theorem \ref{thm:properties}}
Theorem \ref{thm:properties} states that if no rational agent deviates from the protocol, Hearsay satisfies Validity, Agreement, and Integrity. 
These properties guarantee that Hearsay maintains the necessary functionality for a distributed money transfer system.
Specifically, Theorem \ref{thm:properties} has the following corollaries. 

\begin{corollary}
Any transaction broadcast by a rational agent is eventually executed by all rational agents. 
\end{corollary}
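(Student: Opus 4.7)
The plan is to chain together the Validity and Agreement guarantees established in Theorem \ref{thm:properties}. The hypothesis of the corollary is that some rational agent $p$ broadcasts a transaction $tx$; the conclusion is that every rational agent eventually executes $tx$. These are precisely the left-hand side of Validity and the right-hand side of Agreement respectively, so the proof reduces to threading one into the other.

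First I would invoke Lemma \ref{lem:validity} (Validity) on the broadcasting agent $p$. Since $p$ is rational and has broadcast $tx$, Validity guarantees that $p$ itself eventually executes $tx$. This supplies at least one rational agent who has executed $tx$, namely $p$.

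Next I would feed this conclusion into Lemma \ref{lem:agreement} (Agreement). Agreement states that whenever some rational agent executes $tx$, every rational agent eventually executes $tx$. Applying it with the rational witness $p$ produced in the previous step yields exactly the statement of the corollary.

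The only conceptual point worth flagging, rather than a genuine obstacle, is that Theorem \ref{thm:properties} is itself conditional on no rational agent deviating from the protocol specification, so the corollary inherits that premise; the incentive-compatibility argument developed in Section \ref{sec:incentives} is what ultimately discharges it. Beyond that, the proof is a straightforward two-step composition of already-proven lemmas and requires no new case analysis, no inductive argument on transaction heights, and no further assumptions on the network or on the other agents.
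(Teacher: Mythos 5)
Your proof is correct and matches the paper's own justification, which states that the corollary ``follows directly from the combination of Validity and Agreement'': you apply Validity to the broadcaster $p$ to obtain one rational executor and then Agreement to propagate execution to all rational agents. Your remark that the conclusion inherits the premise that no rational agent deviates, discharged by the incentive analysis of Section \ref{sec:incentives}, is also consistent with the paper's framing.
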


This statement follows directly from the combination of Validity and Agreement and guarantees that rational agents are able to spend their money. 
Without this property, the arguments used in Section \ref{sec:incentives} do not hold, as rational agents do not benefit from receiving money that they are unable to spend. 
However, note that this property need not hold for Byzantine agents which broadcast conflicting transactions with the same sequence number. 
In this case, it may be that neither transaction is ever executed.

On the other hand, agents should also not be able to spend money that they do not own and double-spends should be impossible.
If an agent could reliably spend more money than they had, this too would break down the arguments in Section \ref{sec:incentives}, as rational agents have a lower incentive to execute the protocol in order to acquire money. 

\begin{corollary}
No agent may spend money they do not have or enter into debt. 
Moreover, no agent may spend the same money twice.
\end{corollary}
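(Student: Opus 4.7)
The plan is to establish each clause of the corollary separately, drawing on Theorem~\ref{thm:properties} together with the explicit guards built into Algorithm~\ref{algo:protocol}. The corollary really makes three assertions: (i) in the view of any rational agent, no agent ever holds a negative balance; (ii) no committed payment leaves an initiator whose balance does not cover it; and (iii) no unit of balance is ever spent twice. I would handle (i)--(ii) together by a case analysis of the \textbf{Execute} routine, and then derive (iii) from Integrity together with the way sequence numbers are assigned in \textbf{Pay}.

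First I would prove the debt/overdraft clauses by inducting on the prefix of transactions a rational agent has executed, maintaining the invariant that $B[k] \geq 0$ for every index $k$. For an executed transaction \emph{tx}, Condition~3 on line~\ref{hearsay:cond3} guarantees that either $B[\emph{tx.initiator}] > N\epsilon$ at the moment of the check, or $|\emph{tx.fees}| \geq N$, in which case the fee loop injects at least $N\epsilon$ of balance on line~\ref{hearsay:add_fees} before the bad-check is reached. In either case, the balance immediately before line~\ref{hearsay:bad_cond} is at least $N\epsilon$. If the bad-check trips, only the $N\epsilon$ in fees is deducted on line~\ref{hearsay:subtract_fees}, leaving the balance nonnegative and no transfer occurring. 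If it does not trip, the check has certified $B[\emph{tx.initiator}] \geq x + N\epsilon$, so the subsequent deductions of $x$ via \textbf{Commit} and $N\epsilon$ via the fee step both leave the balance nonnegative. Because the dependency and fee-credit loops only add balance for entries actually present in the local $Q$ and $F$ buffers, a Byzantine initiator cannot inflate its apparent balance by forging entries in \emph{tx.deps} or \emph{tx.fees}. By Agreement these updates are mirrored across all rational agents, giving (i) and (ii).

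Next I would prove the no-double-spend clause. The counter \emph{ctr} is monotonically incremented on every invocation of \textbf{Pay} by a rational initiator, so each such transaction carries a unique sequence number; and by Integrity (Lemma~\ref{lem:integrity}), every rational agent executes at most one transaction for each $(\emph{initiator}, \emph{seq})$ pair. Condition~1 further forces transactions from the same initiator to be executed in sequence-number order. Hence the ledger at every rational agent is the result of applying, for each initiator, a totally ordered sequence of deterministic balance updates. Because each committed transaction irreversibly decreases $B[\emph{tx.initiator}]$ by $\emph{tx.value} + N\epsilon$, the same unit of money cannot leave an initiator a second time without it first being credited back through a fresh incoming payment or fee---which would itself be a distinct transaction with a distinct identifier drawn into $Q$ or $F$ exactly once.

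The main obstacle will be handling the Byzantine corner cases cleanly. A Byzantine initiator can try to overdraft, reference nonexistent dependencies, or collide on sequence numbers, and the fee-conversion machinery makes bookkeeping subtle because credits can arrive out of order. Showing that the interplay of Condition~3, the bad-check on line~\ref{hearsay:bad_cond}, and the per-agent $Q$ and $F$ buffers neutralizes every such attack---without implicitly assuming the initiator is correct---is the part that requires the most care. The RRB Integrity guarantee of Theorem~\ref{thm:rrb}, which bars two distinct messages sharing a $(p^*, s)$ pair from both being delivered, is what lets the sequence-number argument in the previous paragraph survive Byzantine initiators, and I expect the proof to hinge on making that reduction explicit.
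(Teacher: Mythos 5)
Your proof is correct and follows essentially the same route as the paper's own argument, which likewise derives the no-debt/no-overdraft clause from Agreement together with Condition~3 and the bad-transaction check, and the no-double-spend clause from Integrity plus the per-initiator sequence-number ordering. Your version merely makes explicit what the paper compresses into four sentences (the nonnegativity induction over the executed prefix and the case analysis of the fee-conversion branch), so the two proofs buy the same result by the same decomposition.
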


This corollary follows from Agreement and Integrity.
By Agreement, if there is any rational agent which determines that the initiator of a transaction does not have sufficient balance to complete the payment, then no rational agent commits the transaction. 
If the initiator does not have sufficient balance to cover the transaction fees, then no rational agent even executes the transaction, again ensuring that no agent may enter a negative balance.
By Integrity, if no rational agent may execute two transactions from the same initiator with the same sequence number. 
This implies that each time transaction is executed, the money is deducted from the initiator's balance prior to the execution of their next transaction.

\section{Incentive Analysis}
\label{sec:incentives}

Because we are interested in systems with a large number of rational agents, we cannot simply prove the Hearsay protocol is correct when at least $\frac{2}{3}$ of the agents behave correctly.
We must instead show that the rational agents achieve the maximum possible reward by following the protocol correctly.
If this is the case, then rational agents will see no reason to deviate.
In particular, in this section we show that following the Hearsay protocol is a Nash equilibrium amongst the rational agents.
This is stated in the following theorem.

\begin{theorem}
No rational agent has any incentive to deviate from Hearsay, assuming the other rational agents are following the protocol.
\label{thm:correqhearsay}
\end{theorem}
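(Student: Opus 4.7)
The plan is to argue by exhaustion over the set of deviations available to a single rational agent $p$ while all other rational agents play according to Hearsay. First I would split the protocol's surface area into two layers: the RRB layer, and the Hearsay bookkeeping on top of it (the \textbf{Pay}, \textbf{Execute}, and \textbf{Commit} routines together with the local state variables $B$, $S$, $Q$, $F$, and $\emph{ctr}$). Deviations in the RRB layer are already handled by Lemma \ref{lem:rrb}, whose hypothesis (that rational agents strictly prefer being inside a $(2t+1)$-quorum) is reinforced here by the fee structure: every executed transaction awards $p$ exactly $\epsilon$ via line \ref{algoline:fee_awarding}, so Assumption \ref{ass:prefer} holds in the Hearsay setting and Lemma \ref{lem:rrb} applies verbatim. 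This lets me treat all incoming and outgoing RRB traffic as protocol-conformant from $p$'s perspective and reduces the problem to analyzing Hearsay-specific deviations.

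Next I would enumerate the Hearsay-specific deviations by control point. At \textbf{Pay}, $p$ could (i) broadcast a transaction that would fail the local balance check on line \ref{algoline:fee_sufficient}, (ii) construct a transaction with a falsified field such as an inflated value, a sequence number out of order, an incorrect $\emph{tx.deps}$, or a manipulated fee list, or (iii) abstain from ever initiating transactions. For (i) and (ii), I would argue via Theorem \ref{thm:properties}: since the other rational agents follow Algorithm \ref{algo:protocol}, they will run the same \textbf{Execute} logic, detect the inconsistency at line \ref{algoline:bad_tx_check} (overdraft) or decline to deliver (sequence gap), and in either case the fee $N\epsilon$ is nevertheless debited from $p$ on line \ref{hearsay:subtract_fees}. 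Thus $p$'s utility is strictly worse than the honest outcome, where the same $N\epsilon$ cost is offset by a successful transfer and by the $\epsilon$ fee credit $p$ earns back from itself. Omitting legitimate entries from $\emph{tx.deps}$ or $\emph{tx.fees}$ only reduces the balance the other agents credit to $p$ before the check, hastening overdraft; including bogus entries has no effect because the \textbf{Execute} loop only acts on entries already present in $Q[\emph{tx.initiator}]$ or $F[\emph{tx.initiator}]$. For (iii), abstention forfeits the one-per-transaction $\epsilon$ self-fee and, more importantly, prevents $p$ from ever converting accrued fee credits into spendable balance, capping its long-run utility.

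At the \textbf{Execute}/\textbf{Commit} side, $p$ could (iv) skip executing a delivered transaction, (v) update $B$, $S$, $F$, $Q$ incorrectly, or (vi) mark a legitimate transaction bad or a bad transaction good. The key observation is that these choices are purely local to $p$: they do not change the view held by any other rational agent, because the only externally visible action $p$ performs in \textbf{Execute}/\textbf{Commit} is future message traffic generated when $p$ initiates its own next transaction. Hence any such deviation can only desynchronize $p$'s state from the consensus state that Theorem \ref{thm:properties} guarantees among the other rational agents. I would then show that every desynchronization strictly dominates $p$: under-crediting $F[p]$ in case (iv) or (v) loses the $\epsilon$ fees $p$ is owed and eventually causes $p$'s own \textbf{Pay} to trip line \ref{algoline:fee_sufficient} or Condition 3 (line \ref{hearsay:cond3}); over-crediting causes $p$'s future transactions to fail at the balance check on line \ref{algoline:bad_tx_check} at the other rational agents, which by Agreement means the transaction is universally marked bad and $p$ pays $N\epsilon$ for nothing. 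Case (vi) is analogous: wrongly committing a bad transaction inflates $p$'s own view of $B[\emph{tx.recipient}]$ or $Q[\emph{tx.recipient}]$ only, with no effect on what the network accepts, and wrongly rejecting a good one deprives $p$ of incoming funds. In all cases deviating yields strictly lower utility than conforming, which establishes the Nash condition.

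The main obstacle I anticipate is case (ii) combined with case (vi): a rational $p$ might try a mixed strategy in which it simultaneously broadcasts a manipulated transaction \emph{and} locally pretends the transaction was good, hoping to profit from divergent beliefs about $B[p]$. Handling this rigorously requires showing that a discrepancy in $B[p]$ is recovered the moment $p$ initiates its next transaction, because other rational agents will overdraft-flag it on line \ref{algoline:bad_tx_check} and still collect the $N\epsilon$ fee, so the ``inflated balance'' is never actually spendable. I would formalize this with a small invariant: at every point where $p$ is about to invoke \textbf{RRB}, the value of $B[p]$ at every other rational agent equals what Algorithm \ref{algo:protocol} would have computed had $p$ been past-correct throughout, and any deviation by $p$ only ever lowers the acceptance probability and never the cost of its subsequent broadcasts. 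Combining this invariant with the case analysis above gives the theorem.
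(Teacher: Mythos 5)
Your decomposition is essentially the paper's: the paper proves the theorem via two lemmas, one showing no rational agent deviates from \textbf{Pay} (a mis-constructed or under-funded transaction is marked bad by the other rational agents, who still debit the $N\epsilon$ fee, so the deviator pays for nothing) and one showing no rational agent deviates in \textbf{Execute}/\textbf{Commit} (such deviations are purely local, so they can only desynchronize the deviator's view of balances, and the paper then splits into the ``view too low'' and ``view too high'' cases exactly as you do, including the scenario of a Byzantine payer exploiting an inflated view). Your explicit handling of the RRB layer via Lemma \ref{lem:rrb} and Assumption \ref{ass:prefer} matches how the paper justifies that assumption in the Hearsay context, and your proposed invariant for the mixed \textbf{Pay}/\textbf{Execute} strategy is a reasonable sharpening of the same idea.

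There is one step where your argument overclaims relative to what it actually establishes. You conclude that ``in all cases deviating yields strictly lower utility than conforming,'' but for the execute-side deviations (your cases (iv)--(vi)) the only utility a rational agent can gain is the saved computational cost of not executing, and the penalty you identify (losing fee credits, having a future transaction overdraft-flagged) is contingent on that desynchronization ever becoming pivotal --- e.g., skipping execution of a transaction between two third parties whose funds never flow to $p$ may cost $p$ nothing in many runs. The paper closes this by explicitly invoking the same risk-aversion assumption used for Corollary \ref{cor:prefer}: agents value the service and weight the worst case heavily, so the expected penalty from desynchronization outweighs the computational savings. Without that assumption (or a quantitative condition tying $\epsilon$ and the punishment probability to the cost of execution), the strict-domination claim for lazy execution does not follow, so you should either import that assumption or state the required inequality explicitly.
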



We prove this statement through the following two lemmas. 
The first lemma states that no rational agent has incentive to deviate from the prescribed \textbf{Pay} function, and the second states that no rational agent has incentive to \emph{execute} transactions incorrectly. 
The theorem is corollary of these two lemmas. 
We periodically reference the line numbers of Algorithm \ref{algo:protocol}.

\begin{lemma}
No rational agent has any incentive to deviate from the \textbf{Pay} function in Hearsay, assuming the other rational agents are following the protocol.
\end{lemma}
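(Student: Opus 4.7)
The plan is to exhaustively enumerate the ways a rational initiator $p$ could deviate from the \textbf{Pay} function and argue that each deviation is weakly dominated by correct execution, under the hypothesis that every other rational agent follows Hearsay. Inspection of Algorithm \ref{algo:protocol} shows that \textbf{Pay} has a small number of observable side effects: a balance precondition check, an optional drain of the fee buffer, the assignment of the six fields of the transaction $tx$, the resetting of $Q[i]$ and the incrementing of \emph{ctr}, and the call to RRB. Accordingly I would partition the deviations into five classes: (a) broadcasting with insufficient balance, bypassing the check on line \ref{algoline:fee_sufficient}; (b) populating \emph{tx.deps} or \emph{tx.fees} with something other than the true contents of $Q[i]$ and $F[i]$; (c) using a sequence number different from the current value of \emph{ctr}; (d) failing to clear $Q[i]$ or to increment \emph{ctr}; and (e) invoking a broadcast channel other than RRB, which subsumes the possibility of not broadcasting at all.

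First I would dispose of the easier classes. For (e), Algorithm \ref{algo:protocol} places $tx$ into an agent's execution buffer only upon an \textbf{RRB-Deliver} event on $tx$, so if $p$ bypasses RRB then no other rational agent ever executes $tx$, and in particular no payment is delivered and no fee credits are ever converted into balance; this is strictly worse than following the protocol. For (a), Lemma \ref{lem:agreement} ensures that every rational agent executes $tx$ identically, and the balance check on line \ref{algoline:bad_tx_check} then marks $tx$ as bad while still deducting the full $N\epsilon$ fee on line \ref{hearsay:subtract_fees}, yielding a strict loss with no matching payment. For (c) and (d), a sequence number that is not $S[p]+1$ permanently violates Condition 1 at every rational agent and stalls $tx$, reusing a sequence number is ruled out by Integrity of RRB (Theorem \ref{thm:rrb}), and retaining entries in $Q[i]$ has no external effect because rational executors credit each dependency at most once through the membership guards around lines \ref{hearsay:add_payments} and \ref{hearsay:add_fees}.

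The main obstacle is class (b), where $p$ appears to have more freedom to shape the transaction. My plan here is to note that by Lemma \ref{lem:agreement} it suffices to analyze a single rational executor's run of \textbf{Execute}. Injecting a spurious or fabricated dependency $t$ cannot inflate $p$'s credited balance, because the same membership guards on lines \ref{hearsay:add_payments} and \ref{hearsay:add_fees} require $t$ to already be present in the executor's own buffer before any credit is granted; if $t$ is instead a genuine but not-yet-executed transaction, Condition 2 stalls $tx$ until $t$ completes, which can only hurt $p$. Omitting legitimate entries is weakly dominated as well: the omitted credits remain in $Q[i]$ and $F[i]$ and are recoverable in a later \textbf{Pay} call, whereas the omission only raises the chance that $p$'s balance check in \textbf{Execute} fails and $tx$ is marked bad on line \ref{algoline:bad_tx_check}. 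Aggregating across classes (a)--(e), every deviation is weakly dominated by the prescribed \textbf{Pay} execution, yielding the lemma.
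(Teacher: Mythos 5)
Your proposal is correct and follows essentially the same route as the paper's own proof: a case analysis over the possible deviations in \textbf{Pay} (insufficient balance, malformed \emph{tx.deps}/\emph{tx.fees}, wrong sequence number, failure to clear buffers, and bypassing RRB), arguing in each case that the correctly-behaving executors will mark the transaction bad while still collecting the fee, stall it on Conditions 1--3, or never execute it at all, so every deviation is weakly dominated. Your treatment of class (b) via the membership guards around lines \ref{hearsay:add_payments} and \ref{hearsay:add_fees} is somewhat more explicit than the paper's, but the underlying argument is the same.
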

\begin{proof}
Suppose $p$ is a rational agent executing the \textbf{Pay} function in Algorithm \ref{algo:protocol}, as they wish to send a payment to another agent or convert their fees into their balance. 
Note that $p$ will initiate this transaction as it is safe to assume $p$ prefers creating a transaction over not creating it so long as the system chooses a suitable transaction fee.
First see that $p$ will not choose to initiate a transaction if their balance is not sufficient (line \ref{algoline:fee_sufficient}), and Condition 3 for their transaction is therefore satisfied (line \ref{hearsay:cond3}).
If $p$ were to do so, the other rational agents would mark the transaction as ``bad'', thereby not carrying out the transaction yet still collecting the fee from $p$ anyways (lines \ref{algoline:bad_tx_check}-\ref{algoline:fee_awarding}).
For the same reason, $p$ will ensure that they construct their transaction correctly (lines \ref{algoline:fee_flag_check}-\ref{algoline:tx_construction}). 
In particular, it is clear to see that $p$ will not use a ``bad'' transaction in their dependency list due to fee collection and will specify the correct recipient so as to not risk sending funds to the wrong destination. 
Additionally, $p$ will maintain the correct sequence number and dependencies to ensure that other rational agents are able execute the transaction according to Conditions 1 and 2 (lines \ref{hearsay:cond1} and \ref{hearsay:cond2}).
Next, it is clear that $p$ will clear their transaction buffer (line \ref{algoline:clear_tx_buffer}), because they'd prefer not to participate in unnecessary communication.
And finally $p$ must participate in a RRB broadcast to communicate their transaction to the other agents, as other rational agents who are assumed to behave correctly will not execute a transaction until it is \emph{delivered} in RRB.
Thus $p$ will not deviate in the \textbf{Pay} function.
\end{proof}

\begin{lemma}
No rational agent has any incentive to \emph{execute} a transaction incorrectly (or not at all) in Hearsay, assuming the other rational agents are following the protocol.
\end{lemma}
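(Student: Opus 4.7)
The plan is to enumerate the possible deviations from the \textbf{Execute} routine and show that each is strictly dominated by correct execution, so that the best response against honest-rational peers is to execute correctly. A rational agent's utility is driven by two things: the fee income collected through the buffer $F[i]$ (which only accrues via line \ref{algoline:fee_awarding} of \textbf{Execute}), and the ability to spend their own balance, which requires that the agent's self-view of $B[i]$ and $F[i]$ agree with the view held by the other rational agents executing their future transactions.

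First I would dispose of total abstention. If $p$ never executes some delivered transaction $tx$, then $p$ never appends $tx$ to $F[p]$, and so $p$ permanently forgoes the $\epsilon$ of fee income that correct execution would have yielded. Because Theorem \ref{thm:properties} applied to the other rational agents ensures that $tx$ will be executed by everyone else, $p$'s abstention produces no reduction in communication cost either (the \textbf{Execute} routine is purely local computation; the only communication is the RRB delivery, already paid for by Lemma \ref{lem:rrb}). Abstention therefore strictly loses fees with no compensating saving.

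Next I would handle the remaining deviations, which all amount to updating the local state $B$, $F$, $Q$, or $S$ differently from the protocol, or toggling \emph{tx.bad} incorrectly. The key observation is that each such deviation perturbs only $p$'s own local view; every other rational agent, by hypothesis, updates their state exactly as prescribed. When $p$ later invokes \textbf{Pay} to spend, the balance and fee checks that matter for whether $p$'s transaction is committed are the ones performed by the other rational agents on \emph{their} copies of $B$ and $F$, not on $p$'s. Thus any self-inflicted inflation of $B[p]$ or $F[p]$ is useless: the other agents will still mark $p$'s transaction \emph{bad} via line \ref{algoline:bad_tx_check}, in which case $p$ pays the full $N\epsilon$ fee to everyone yet the payment is never committed (lines \ref{algoline:bad_tx_check}--\ref{algoline:fee_awarding}). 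Conversely, any deflation of $B[p]$ or $F[p]$, or failure to advance $S[\text{tx.initiator}]$, or spuriously flagging an honestly-issued dependency as bad (which propagates via the \emph{tx.deps} clause in line \ref{algoline:bad_tx_check}), either causes $p$'s own subsequent \textbf{Pay} to abort at line \ref{algoline:fee_sufficient}, or causes $p$ to locally refuse to credit payments made to $p$, again strictly reducing spendable income. In every case I can pair the deviation with a loss in either forfeited fees, forfeited incoming payments, or a ``bad'' self-transaction, and confirm by inspection of Algorithm \ref{algo:protocol} that no utility is recovered elsewhere.

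The main obstacle I expect is the case of a seemingly ``free'' deviation in which $p$ suppresses a bookkeeping update that only affects its own outgoing transactions much later, or one where $p$ flags a transaction bad that the rest of the system commits, since $p$'s local divergence is not immediately visible. Here I would lean on the observation that $p$'s own balance $B[p]$ is only ever modified by transactions in $\Phi(\cdot)$ involving $p$ (the same structural fact used in the Agreement proof), so any local deviation in the treatment of a transaction $tx$ with $\text{tx.initiator} \neq p$ and $\text{tx.recipient}\neq p$ has zero effect on $p$'s utility from spending, while a deviation in the treatment of a transaction with $p$ as initiator or recipient can only either subtract from $p$'s own future spendable balance or cause $p$'s next \textbf{Pay} to be rejected as bad. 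Because every deviation therefore lies in the union of ``loses fees,'' ``loses incoming payments,'' or ``wastes fees on a self-bad transaction,'' none is a profitable unilateral deviation, and correct execution is the best response.
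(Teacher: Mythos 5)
Your overall strategy matches the paper's: you classify deviations in \textbf{Execute} by their effect on $p$'s local view, observe that self-serving inflation of $B[p]$ or $F[p]$ is worthless because the commit decision for $p$'s future transactions is made on the \emph{other} agents' copies of the state (so $p$ only earns a ``bad'' flag and loses $N\epsilon$ in fees), and argue that any downward divergence costs $p$ spendable income. The paper's proof runs the same two cases (local view of another agent too low, or too high) and reaches the same conclusions.

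There is, however, one genuine gap. After establishing that no deviation can \emph{gain} funds, the paper explicitly concedes that ``utility can only be gained by abstaining from execution to eradicate computational costs,'' and it closes exactly this residual incentive by importing the risk-aversion assumption from Section \ref{sec:rrb} (agents value the service and weight the failure scenarios heavily). You instead assert that abstention produces ``no compensating saving'' because \textbf{Execute} is purely local computation --- but local computation is itself a cost, and skipping it is precisely the free-rider temptation the lemma must defuse. Without either weighing that saving against the expected losses or invoking the risk-aversion assumption, your argument does not rule it out. Relatedly, your claim that by abstaining ``$p$ permanently forgoes the $\epsilon$ of fee income'' is mechanically imprecise: line \ref{algoline:fee_awarding} is executed by \emph{every} agent, so the fee credit for $p$ is recorded in the other agents' copies of $F[p]$ regardless of whether $p$ runs \textbf{Execute}; what $p$ actually loses is an accurate local record of which fees and payments it may safely reference in a future \textbf{Pay}, which is the ``local view too low'' branch of the paper's case analysis rather than a direct forfeiture. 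Your use of the $\Phi(\cdot)$ structural fact to dispose of transactions not involving $p$ is a nice addition the paper does not make explicit, but note that for those transactions you show only indifference, not strict domination, so your opening claim of strict dominance should be weakened to ``no profitable deviation,'' as you in fact do at the end.
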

\begin{proof}
Here we examine the effects on $p$ that result from $p$ not executing a transaction as specified by the protocol. 
As the purpose of the \textbf{Execute} function is to reward fees and maintain local knowledge of balances, the following argument details the potential outcomes of deviating in the \textbf{Execute} function and obtaining an incorrect local view of the balances within the system. Note, Theorem \ref{thm:properties} and Lemma \ref{lem:agreement} tell us that the \emph{Agreement} property is satisfied if no rational agent deviates from Hearsay. 
This means that if a correct agent executes a transaction, then all correct agents will execute the transaction.
Then $p$ may have a local view that is inconsistent with the other rational agents in the system if $p$ does not execute the transaction correctly.
It is clear that incorrectly executing a transaction in a way that gives extra funds to themselves (i.e. making money appear out of thin air) would not benefit $p$. 
Every rational agent maintains their own local view of $p$'s balance, so if those extra funds become pivotal to a transaction $p$ initiates, then each rational agent will penalize $p$ by deducting fees for the transaction without committing it.
Furthermore an agent will always award funds to itself correctly so that it can receive payments or rewards.
This tells us two important details:
\begin{itemize}
    \item An agent cannot gain utility through awarding itself extra funds by any means. Therefore utility can only be gained by abstaining from execution to eradicate computational costs.
    \item Any incorrect execution (or lack of execution) that is rational would result in $p$'s local view of another agent $q$ being either too high or too low
\end{itemize}
The second point provides two cases for analysis.

Suppose that an agent $q$ pays $p$ in a transaction when $p$ thinks $q$'s balance is too low (because $p$ decided it was worthwhile to avoid executing the dependencies), but the rest of the system is assured it is not.
Then in this case, $p$ will mark $q$'s transaction as ``bad" and elect not to commit it. 
As a result, $p$ will not have knowledge of funds they have received and lose out on the opportunity to safely spend those funds.

Now assume the case that $p$'s local view of the balance of some agent $q$ is too high.
Suppose $q$ is Byzantine and sends a payment to $p$ that is larger than the balance $q$ actually has but smaller than $p$ believes $q$ has. 
Then $p$ will accept the payment to obtain more funds while the other rational agents mark this transaction as bad and do not commit it. 
So, now $p$ has a local view of their own balance that is higher than that of the rest of the system. 
This means that the other rational agents will penalize $p$ by deducting fees without committing $p$'s transaction if these extra funds become pivotal.

Similar to Section \ref{sec:rrb}, we can apply the assumption that agents value the benefits of the service and perceive risk of service failure negatively.
Then the nature of the above scenarios would prevent rational agents from deviating from the protocol for the benefit of saving on computational resources.
Therefore in both cases, $p$ prefers to execute the transaction correctly.
\end{proof}

\section{Discussion and Future Work}\label{sec:future}

In this work we presented Hearsay, a novel distributed monetary system that tolerates both Byzantine and rational behavior and operates without the need for consensus. 
To accomplish this we proposed a new reliable broadcast primitive called Rational Reliable Broadcast, an incentive-based algorithm solving a classical problem at the intersection of game theory and distributed computation. 
We hope our work inspires further inquiry, and we discuss some possible future work below.

\paragraph{Externalities and Stronger Game Theoretic Guarantees} In Section \ref{sec:rrb} we pointed out that Rational Reliable Broadcast may fail if we consider attacks where rational agents can ``ransom'' a broadcast channel through external communication channels. Interesting dynamics arise if we allow such externalities. The presence of externalities such as collusion, extra communication, or inter-agent relationships like spite opens the floodgates to complex strategies, so we assumed in this preliminary work that they do not occur.
However, detailed investigation of the effects of these phenomena in our algorithms and distributed computation at large could lend interesting results. 
In particular, providing stronger game theoretic guarantees for our algorithms is a direct extension of our work. 
For example, our model is very coarse-grained in the sense that any agent exhibiting collusive behavior must be Byzantine.
However, a stronger notion of equilibria such as \emph{$k$-resilient} equilibria \cite{abraham2011distributed, halpern2008beyond}, which states that no group of $k$ players can be ``better off'' by deviating, could be used to characterize our algorithm's sensitivity to rational collusive behavior. 
In this work, we proved that behaving correctly in Hearsay and RRB constitutes a Nash (i.e., \emph{1-resilient}) equilibrium.
Furthermore, other concepts from economics such as \emph{sequential equilibrium} or \emph{trembling-hand perfect equilibrium}--where agents are allowed to react to observed actions or ``tremble'' and play erroneous strategies with negligible probability, respectively--may be used to further refine the robustness of distributed algorithms to rational behavior.

\paragraph{Optimizations of the Hearsay Protocol}
In the Hearsay protocol specification in Algorithm \ref{algo:protocol}, Condition 2 enforces transaction dependencies which create a partial order beyond simply source order on transaction executions. 
Enforcing dependency ordering is necessary in Hearsay because all agents must be able to agree on whether or not to commit a transaction. 
This ensures that that an agent attempting to spend more money than it has is suitably punished. 
However, additional ordering restrictions cause increased transaction settlement times and inevitably result in system slowdowns. 
A clear optimization possible in Hearsay is to provide an agent initiating a transaction with the option of \textit{which} dependencies to reference in the transaction. 
If an agent already has a balance greater than their transaction amount plus fees, there is no compelling reason to reference any dependencies. 
For simplicity of presentation and analysis we forego this optimization, and we leave it to future work to further characterize the performance tradeoffs of the punishment mechanism used in Hearsay. 
Additionally, we remark that an optimization of this variety loosely resembles the unspent transaction output (UTXO) model used by Bitcoin \cite{Nakamoto2008}, in which each transaction references a set of transaction outputs which were received but not yet spent rather than an account. 

\paragraph{Solving Classical Distributed Computing Problems in Game Theoretic Settings}
We design and make use of RRB in a black-box fashion in Hearsay in part because we believe that RRB may be useful in applications outside of money transfer. 
Distributed peer-to-peer protocols in which the decision-makers in the system are real people rather than computer processes are rapidly growing in popularity.
Applying game theoretic notions of rationality and incentive compatibility to classical distributed computing problems is critical to understanding the interactions between rational agents and maintaining fault tolerance when agents are self-serving. 
Additionally, characterizing the performance impact of the incentive mechanisms used in RRB is an interesting avenue of future work. 

\bibliographystyle{ACM-Reference-Format}
\bibliography{refs}

\pagebreak
\appendix
\section{Rational Reliable Broadcast Algorithm}\label{app:rrbalg}
\begin{algorithm}[htb!]
\scriptsize
\DontPrintSemicolon
\SetKwInOut{Input}{input~}
\SetKwInOut{Local}{Local variables~}
\SetKwInOut{Output}{output~}
\DontPrintSemicolon
\Local{$N$ :: the number of agents in the system \\
$i \in \{1, \dots, N\}$ :: my agent number \\
$s = 1$ :: RRB instance's sequence number\\
$\Vec{l}_e, \Vec{l}_r = 0^N$ :: \emph{echo} and \emph{ready} last sequence received vector\\
$\{Q_\emph{e in}^i\}_{i=1}^{N}$ :: N lists of (sequence number, incoming \emph{echo} message) tuples ordered by sequence number \\
$\{Q_\emph{r in}^i\}_{i=1}^{N}$ :: N lists of (sequence number, incoming \emph{ready} message) tuples ordered by sequence number \\
$\{Q_\emph{out}^i\}_{i=1}^{N}$ :: N lists of (sequence number, outgoing message) tuples ordered by sequence number \\
}
\vspace{\baselineskip}
func \textbf{Broadcast}$(\emph{msg}):$\;
\algotab Send tuple $(\emph{initial}, msg, s)$ to all agents \;
\texttt{\\}

func \textbf{FlushInQueues}$()$:\;
\algotab \For{$1 \leq i \leq N$}
{
\algotab $(\emph{seq}, \emph{msg}) \leftarrow \text{peak}(Q^i_\emph{e in})$\;
\algotab \While{$\Vec{l}_e^i \geq seq - 1$}
    {
    \algotab \lIf{$\Vec{l}_e^i = seq - 1$}{$\Vec{l}_e^i \leftarrow seq$}
    \algotab $\text{pop}(Q^i_\emph{e in})$ until sequence number is greater than $\Vec{l}_e^i$ (so that duplicate messages for the same sequence number are ignored) \label{algoline:ignore1}\;
    \algotab $(\emph{seq}, \emph{msg}) \leftarrow \text{peak}(Q^i_\emph{e in})$\;
    }
 \;  
\algotab $(\emph{seq}, \emph{msg}) \leftarrow \text{peak}(Q^i_\emph{r in})$\;
\algotab \While{$\Vec{l}_r^i = seq - 1$}
    {
    \algotab \lIf{$\Vec{l}_r^i = seq - 1$}{$\Vec{l}_r^i \leftarrow seq$}
    \algotab $\text{pop}(Q^i_\emph{r in})$ until sequence number is greater than $\Vec{l}_r^i$ (so that duplicate messages for the same sequence number are ignored)\label{algoline:ignore2}\;
    \algotab $(\emph{seq}, \emph{msg}) \leftarrow \text{peak}(Q^i_\emph{r in})$\;
    }
}
\texttt{\\}

func \textbf{FlushOutQueue}$()$:\;
\algotab \For{$1 \leq i \leq N$}
{
\algotab $(\emph{seq}, \emph{msg}) \leftarrow \text{peak}(Q^i_\emph{out})$\;
\algotab \While{$\min\{\Vec{l}_e^i, \Vec{l}_r^i\} \nless seq$}
    {
    \algotab send \emph{msg} to $p_i$\;
    \algotab $\text{pop}(Q^i_\emph{out})$\;
    \algotab $(\emph{seq}, \emph{msg}) \leftarrow \text{peak}(Q^i_\emph{out})$\;
    }
}
\texttt{\\}

func \textbf{Deliver}$(msg, s'):$\;
\algotab \lIf{$s \leq s'$}{$s \leftarrow s' + 1$}
\algotab \lIf{A message has not been delivered for sequence $s$}{deliver \emph{msg}}
\texttt{\\}

on \textbf{Receive}$(\emph{initial}, msg, s')$ from $p_j$:\;
\algotab \lIf{\emph{echo} message for sequence $s'$ not already queued}
{push$(Q^j_\emph{out}, (s', (\emph{echo}, msg, s')))$ and \textbf{FlushOutQueue}$()$\;
}

on \textbf{Receive}$(l \in \{\emph{echo}, \emph{ready}\}, msg, s')$ from $p_j$:\;
\algotab \lIf{$l = \emph{echo}$}{push$(Q^j_\emph{e in}, (s', msg))$}
\algotab \lElse{push$(Q^j_\emph{r in}, (s', msg))$}\;

\algotab \textbf{FlushInQueues}$()$\;

\algotab \uIf{$\Big|\big\{i\in\big[N\big] : \Vec{l}_e^i \geq s'\big\}\Big| \geq (n+t)/2\text{ or }\Big|\big\{i\in\big[N\big] : \Vec{l}_r^i \geq s'\big\}\Big| \geq (t+1)$\label{line:echoreadycond}}
{
\algotab \lIf{\emph{echo} message for sequence $s'$ not already queued}
{push$(Q^j_\emph{out}, (s', (\emph{echo}, msg, s')))$\;
}
\algotab \lIf{\emph{ready} message for sequence $s'$ not already queued}
{push$(Q^j_\emph{out}, (s', (\emph{ready}, msg, s')))$\;
}
}
\algotab \lIf{$\Big|\big\{i\in\big[N\big] : \Vec{l}_r^i \geq s'\big\}\Big| \geq (2t+1)$}{$\textbf{Deliver}(msg, s')$}\;
\algotab \textbf{FlushOutQueue}$()$

\caption{Rational Reliable Broadcast}
\label{algo:rrb}
\end{algorithm} 
\newpage
\section{Deferred Proofs}\label{app:defproofs}

\begin{lemma} Hearsay satisfies Validity if no rational agent deviates from the protocol specification.
\end{lemma}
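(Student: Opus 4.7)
The plan is to reduce Hearsay's Validity to RRB's Validity together with a careful verification that the three execution conditions (Conditions~1, 2, and 3) from Algorithm~\ref{algo:protocol} are eventually met for any transaction broadcast by a rational agent. Since we assume no rational agent deviates from the protocol, any rational agent $p$ behaves exactly as specified by the pseudo-code, so all invariants implicit in the algorithm may be used freely.

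The core technical step will be the following auxiliary claim: if $p$ has already executed every transaction of $p$'s own that is ordered before $tx$, then $p$ eventually executes $tx$. To prove the claim, first observe that by Theorem~\ref{thm:rrb} (RRB Validity), $p$ eventually delivers $tx$ and hence places $tx$ in its execution buffer. Condition~3 (balance sufficiency) is enforced directly in \textbf{Pay} at line~\ref{algoline:fee_sufficient}, so it holds at the moment of broadcast; I must then argue that no activity between broadcast and execution invalidates it, which follows because $B[p]$ is modified only when $p$ itself executes a transaction. Condition~1 (sequence-number matching) follows from $p$'s internal counter $\emph{ctr}$ being incremented by exactly one per broadcast, combined with the inductive hypothesis that all of $p$'s earlier transactions have been executed locally, so $S[p] = tx.\emph{seq} - 1$. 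Condition~2 (dependencies executed) is the most delicate: the sets $tx.\emph{deps}$ and $tx.\emph{fees}$ are drawn from the local buffers $Q[p]$ and $F[p]$, which are populated only inside \textbf{Execute} at lines~\ref{hearsay:add_feedependency} and~\ref{hearsay:add_txdependency}; thus every element of $tx.\emph{deps} \cup tx.\emph{fees}$ was already executed locally by $p$ before $p$ broadcast $tx$. Once all three conditions hold, $p$ invokes \textbf{Execute}($tx$).

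The lemma then follows by induction on the sequence number of transactions broadcast by $p$. The base case is vacuous: before $p$'s first broadcast, $p$ trivially has executed all of its own earlier transactions. The inductive step applies the auxiliary claim, which produces an execution of the current transaction and hence increments $S[p]$, re-establishing the hypothesis needed for $p$'s next broadcast.

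The main obstacle I anticipate is the bookkeeping around Condition~2. One must be explicit that $p$ only places in $tx.\emph{deps}$ and $tx.\emph{fees}$ entries drawn from its local buffers, and that those buffers are mutated only by $p$'s own executions; this couples the dependency condition cleanly to $p$'s local execution history and sidesteps any reliance on when other agents execute transactions destined for $p$. Aside from this, the argument is essentially a transfer of guarantees from RRB through the condition checks into Hearsay's execution logic.
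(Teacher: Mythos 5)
Your proof follows essentially the same route as the paper's: it transfers delivery from RRB Validity, verifies that Conditions~1--3 are guaranteed by the \textbf{Pay} function's own bookkeeping (the sequence counter for Condition~1, the fact that $tx.\emph{deps}$ and $tx.\emph{fees}$ are drawn from buffers populated only by $p$'s prior local executions for Condition~2, and the balance check at line~\ref{algoline:fee_sufficient} for Condition~3), and closes with an induction on $p$'s own sequence numbers from a vacuous base case, exactly as in Appendix~\ref{app:defproofs}. Your remark that Condition~3 must be shown to persist from broadcast time to execution time is, if anything, slightly more careful than the paper, which simply carries $p.B[p] \geq N\epsilon$ forward from the \textbf{Pay} check without comment; neither argument fully discharges the case where $p$ broadcasts a new transaction before executing its own earlier ones, but this is a shared feature rather than a divergence in approach.
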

\begin{proof} In this proof, all line numbers refer to lines in the Hearsay Algorithm. 
Additionally, we use the notation $p.S[q]$ to designate the local variable held by agent $p$ which stores the current sequence number of agent $q$.

Let $\mathcal{R}$ be the set of rational agents and $\mathcal{C}$ be the set of past-correct agents.
Let $p$ be an arbitrary agent, then the formal theorem statement is:

\begin{equation}
    (\mathcal{R} \subseteq \mathcal{C} \land \forall p \in \mathcal{R}, p \textbf{ broadcasts } tx) \rightarrow p \textbf{ executes } tx
\end{equation}

To prove this, we first prove the following statement for any agent $p$ which broadcasts a transaction \emph{tx}. 

\begin{equation}
    (\mathcal{R} \subseteq \mathcal{C} \land \forall p \in \mathcal{R},  (p \textbf{ broadcasts } tx \land tx.seq = s \land p.S[p] = s-1 )) \rightarrow p \textbf{ executes } tx \label{eq:validity_helper}
\end{equation}

In the following logic, note that although nearly every logical implication depends it, we omit copying the predicate $\mathcal{R} \subseteq \mathcal{C}$ on each line for the sake of readability. 

\begin{align*}
    &\mathcal{R} \subseteq \mathcal{C} \land \forall p \in \mathcal{R},  (p \textbf{ broadcasts } tx \land tx.seq = s \land p.S[p] = s-1) &&\text{(hypothesis)} \\
    &\implies \forall p \in \mathcal{R},  (p \textbf{ delivers } tx \land tx.seq = s \land p.S[p] = s-1) &&\text{(RRB validity)} \\
    &\implies \forall p \in \mathcal{R},  (p \textbf{ delivers } tx \land (\forall t \in \emph{tx.deps} \cup \emph{tx.fees}, p \textbf{ executes } t) \land tx.seq = s \land p.S[p] = s-1) &&\text{(lines \ref{hearsay:assign_tx}, \ref{hearsay:assign_I}, \ref{hearsay:add_feedependency},
    \ref{hearsay:add_txdependency})} \\
    &\implies \forall p \in \mathcal{R},  (p \textbf{ delivers } tx \land (\forall t \in \emph{tx.deps} \cup \emph{tx.fees}, p \textbf{ executes } t) \\
    &\land tx.seq = s \land p.S[p] = s-1 \land p.B[p] \geq N \epsilon) &&\text{(line \ref{hearsay:check_my_balance})} \\
    &\implies p \textbf{ executes } tx &&\text{(lines \ref{hearsay:buffer}, \ref{hearsay:cond1}, \ref{hearsay:cond2}, \ref{hearsay:cond3})} \\
\end{align*}

Then let $tx_1, \dots, tx_k$ be the set of all transactions broadcast by $p$, in order of increasing sequence number. 
By lines \ref{hearsay:assign_tx} and \ref{hearsay:assign_c}, it is clear that $tx_{k}.seq = k$. 
The inductive argument is then clear. 

Next we prove the base case of the induction argument. 

\begin{equation}
    \mathcal{R} \subseteq \mathcal{C} \land p \in \mathcal{R} \land p \textbf{ broadcasts } tx \land \emph{tx.seq} = 1 \rightarrow p \textbf{ executes } tx
\end{equation}

\begin{align*}
    &\mathcal{R} \subseteq \mathcal{C} \land p \in \mathcal{R} \land p \textbf{ broadcasts } tx \land \emph{tx.seq} = 1 &&\text{(induction hypothesis)} \\
    &\implies p \in \mathcal{R} \land p \textbf{ broadcasts } tx \land \emph{tx.seq} = 1 \land p.S[p] = 0 &&\text{(initial condition)} \\
    &\implies p \textbf{ executes } tx&&\text{(Equation \ref{eq:validity_helper})} \\
\end{align*}

Next, we assume that the following holds for all transaction sequence numbers $\emph{tx.seq} \leq k$. 

\begin{equation}
    \mathcal{R} \subseteq \mathcal{C} \land p \in \mathcal{R} \land p \textbf{ broadcasts } tx \land \emph{tx.seq} \leq k \rightarrow p \textbf{ executes } tx \label{eq:validity_hypothesis}
\end{equation}

And then we prove that the following holds for $\emph{tx.seq} = k + 1$.

\begin{equation}
    \mathcal{R} \subseteq \mathcal{C} \land p \in \mathcal{R} \land p \textbf{ broadcasts } tx \land \emph{tx.seq} = k + 1 \rightarrow p \textbf{ executes } tx
\end{equation}

\begin{align*}
    &\mathcal{R} \subseteq \mathcal{C} \land p \in \mathcal{R} \land p \textbf{ broadcasts } tx \land \emph{tx.seq} = k + 1 &&\text{(hypothesis)}\\
    &\implies p \in \mathcal{R} \land p \textbf{ broadcasts } tx \land \emph{tx.seq} = k + 1 \land p.S[p] = k &&\text{(Equation \ref{eq:validity_hypothesis})}\\
    &\implies p \textbf{ executes } tx &&\text{(Equation \ref{eq:validity_helper})}\\
\end{align*}
\end{proof}
\begin{lemma} Hearsay satisfies Agreement if no rational agent deviates from the protocol specification.
\end{lemma}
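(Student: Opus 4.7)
The plan is to proceed by strong induction on a notion of transaction height, following the sketch. First I would formalize the dependency closure $\Phi(tx)$ of a transaction $tx$ initiated by agent $p$ as the smallest set of transactions such that (i) $tx \in \Phi(tx)$, (ii) every transaction initiated by $p$ with sequence number at most $tx.\emph{seq}$ lies in $\Phi(tx)$, and (iii) for every $t \in \Phi(tx)$, each transaction in $t.\emph{deps} \cup t.\emph{fees}$ is in $\Phi(tx)$. Define the height of $tx$ to be the length of the longest chain in $\Phi(tx)$ under the induced dependency ordering. Before starting the induction, I would establish a structural lemma: the local variable $q.B[tx.\emph{initiator}]$ held by any correct agent $q$ is modified only during the \textbf{Execute} step of a transaction in $\Phi(tx)$. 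This follows by inspection of lines \ref{hearsay:add_payments}, \ref{hearsay:add_fees}, \ref{hearsay:subtract_fees}, and \ref{hearsay:subtract_payment}, combined with the sequence number ordering enforced by Condition 1 on line \ref{hearsay:cond1}.

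For the base case, transactions of height $1$ have empty \emph{deps} and \emph{fees} fields and only draw from the initial balance, which is agreed upon by all agents by assumption. By RRB Agreement (Theorem \ref{thm:rrb}) every rational agent delivers $tx$; Conditions 1, 2, 3 trivially hold once earlier same-initiator transactions have been executed (themselves of smaller height), and every rational agent therefore performs identical balance and dependency checks, reaching the same verdict at line \ref{hearsay:bad_cond} and hence the same commit decision. For the inductive step, assume Agreement for all transactions of height at most $k$ and let $tx$ have height $k+1$. By RRB Agreement every rational agent delivers $tx$; by the inductive hypothesis applied to each $t \in \Phi(tx) \setminus \{tx\}$ (each of height at most $k$), every rational agent executes those transactions and agrees on their commit/abort status. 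Combined with the structural lemma, each rational agent observes the same value of $B[tx.\emph{initiator}]$ at the moment they are ready to execute $tx$, processes identical contents for each entry of \emph{tx.deps} and \emph{tx.fees}, and so reaches the same determination at line \ref{hearsay:bad_assign}, calling \textbf{Commit} in unison.

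The main obstacle is making the structural lemma airtight in the face of asynchronous interleaving. I must argue that (a) even though rational agents may execute transactions outside $\Phi(tx)$ in different orders, none of those executions touch $B[tx.\emph{initiator}]$; (b) the order in which transactions within $\Phi(tx)$ are executed is irrelevant to the final balance seen immediately before executing $tx$, because the only transactions that debit the initiator are their own outgoing transactions (totally ordered by Condition 1) and the only credits are those explicitly referenced via \emph{deps} and \emph{fees} entries, which are added exactly once by line \ref{hearsay:add_payments} and line \ref{hearsay:add_fees}; and (c) while the fee buffers $F[\cdot]$ and payment buffers $Q[\cdot]$ may contain different elements across agents at any given moment, this divergence cannot influence the execution of $tx$, since only entries explicitly named in \emph{tx.deps} or \emph{tx.fees} are consumed, and Condition 2 guarantees those entries have already been executed by every rational agent. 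The commit half of Agreement then follows immediately from the fact that the bad-flag assignment at line \ref{hearsay:bad_assign} is a deterministic function of data on which all rational agents now agree.
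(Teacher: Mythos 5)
Your proposal is correct and follows essentially the same route as the paper's own proof: the same recursive dependency closure $\Phi(tx)$, the same induction on height, the same key observation that executing a transaction only alters the initiator's balance (the paper's $p.\Delta(tx)[j]=0$ for $j \neq tx.\emph{initiator}$), and the same conclusion via determinism of the bad-flag check. Your ``structural lemma'' is just a repackaging of the paper's helper statement, so nothing further is needed.
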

\begin{proof}The formal statement is

\begin{equation}
    (\mathcal{R} \subseteq \mathcal{C} \land \exists p \in \mathcal{R}: p \textbf{ executes } tx) \rightarrow (\forall q \in \mathcal{R}, q \textbf{ executes } tx \land (p \textbf{ commits } tx \rightarrow q \textbf{ commits } tx))
\end{equation}

For simplicity of notation and understanding, the predicates and implications used in this proof need not hold at all points in time, as the network is asynchronous and events are only partially ordered.
Instead, the logical implications in this proof must \textit{eventually} hold. 
For example, if all agents except for one have executed a transaction, it may be the case that the final agent has not yet executed the transaction but is guaranteed to do so in the future.
Additionally, when equality is stated between local variables from two different agents, it means only that at the moment the variables are accessed by each agent in the context of the surrounding logic, they are guaranteed to contain the same value. 
Also note that although nearly every logical implication depends it, we omit copying the predicate $\mathcal{R} \subseteq \mathcal{C}$ on each line for the sake of readability. 

We define the following sets and quantities to be used in the proof below. 

Let $p$ be some rational agent which executes a transaction $tx$.
Assume that the total number of transactions initiated is finite.
Let $\Phi(tx)$ be a set of transactions defined recursively as follows:

\begin{equation}
    t \in \Phi(tx) \iff (t=tx) \lor (\exists v \in \Phi(tx): (t \in \emph{v.deps}) \lor (t \in  \emph{v.fees}) \lor (\emph{t.initiator} = \emph{v.initiator} \land \emph{t.seq} \leq \emph{v.seq}) )
\end{equation}

In other words, $\Phi(tx)$ is the set of all transactions that a correct agent must execute prior to executing $tx$. 
It is clear that for any $tx$,

\begin{equation}
    \forall t \in \Phi(tx) \setminus tx, \Phi(t) \subsetneq \Phi(tx) \label{eq:d_subset}
\end{equation}

Additionally, let $h(tx) \in \mathbb{N}$ be the \emph{height} of $tx$, where height is defined as 

\begin{equation}
    h(tx) = \begin{cases} 
    1\;&\text{if $\Phi (tx) = \{tx\}$}\\
    1 + \underset{t \in \Phi(tx) \setminus tx}{\max} h(t),\;&\text{otherwise.} \end{cases}\label{eq:height}
\end{equation}

Intuitively, the height of $tx$ is the length of the longest chain of transaction which a correct agent must execute prior to $tx$. 

Let $p.\Delta(tx) \in \mathbb{R}^N$ be the change in agent $p$'s local view of the balances of each other agent which is caused by $p$ executing $tx$. 
It is clear from the \textbf{Pay} function that a correct agent $p$ will only modify the balance of \emph{tx.initiator} for any transaction $tx$ that $p$ executes. More specifically,

\begin{equation}
    \forall j : (j \neq \emph{tx.initiator}), p.\Delta(tx)[j] = 0 \label{eq:only_initiator}
\end{equation}

We use the following helper statement. 

\begin{equation}
\begin{split}
    &\mathcal{R} \subseteq \mathcal{C} \land \big((\exists p \in \mathcal{R}, \forall q \in \mathcal{R}, \forall t \in \Phi(tx) \setminus tx, (p \textbf{ executes } tx) \land (q \textbf{ executes } t) \land (p \textbf{ commits } t \rightarrow q \textbf{ commits } t)) \\
    &\rightarrow (q \textbf{ executes } tx \land (p \textbf{ commits } tx \rightarrow q \textbf{ commits } tx))\big)
\end{split}
\label{eq:agreement_helper}
\end{equation}

This helper statement states that if some rational agent $p$ executes $tx$, all rational agents execute all dependencies of $tx$, and all rational agents agree on whether or not to commit each dependency of $tx$, then all rational agents execute $tx$ and agree on whether or not to commit $tx$.
We prove the helper statement with the following logic. 

\begin{align*}
    \mathcal{R} \subseteq \mathcal{C} \land &(\exists p \in \mathcal{R}, \forall q \in \mathcal{R}, \forall t \in \Phi(tx) \setminus tx, (p \textbf{ executes } tx) \land (q \textbf{ executes } t) \land (p \textbf{ commits } t \rightarrow q \textbf{ commits } t)) &&\text{(hypothesis)}\\
    \implies &\exists p \in \mathcal{R}, \forall q \in \mathcal{R}, \forall t \in \Phi(tx) \setminus tx, (p \textbf{ delivers } tx) \land (q \textbf{ executes } t) \land (p \textbf{ commits } t \rightarrow q \textbf{ commits } t) &&\text{(line \ref{hearsay:buffer})}\\
    \implies &\exists p \in \mathcal{R}, \forall q \in \mathcal{R}, \forall t \in \Phi(tx) \setminus tx, (q \textbf{ delivers } tx) \land (q \textbf{ executes } t) \land (p \textbf{ commits } t \rightarrow q \textbf{ commits } t) &&\text{(RRB Agreement)}\\
    \implies &\exists p \in \mathcal{R}, \forall q \in \mathcal{R}, \forall t \in \Phi(tx) \setminus tx, (q \textbf{ delivers } tx) \land (q \textbf{ executes } t) \land (p \textbf{ commits } t \rightarrow q \textbf{ commits } t) &&\text{(lines \ref{hearsay:add_payments}, \ref{hearsay:add_fees}, \ref{hearsay:subtract_fees}, \ref{hearsay:subtract_payment};}\\
    &\land (p.\Delta(t) = q.\Delta(t))  &&\text{Equation \ref{eq:only_initiator})}\\
    \implies &\exists p \in \mathcal{R}, \forall q \in \mathcal{R}, \forall t \in \Phi(tx) \setminus tx, (q \textbf{ delivers } tx) \land (q \textbf{ executes } t) \land (p \textbf{ commits } t \rightarrow q \textbf{ commits } t) \\
    &\land (p.B[\emph{tx.initiator}] = q.B[\emph{tx.initiator}]) &&\text{(definition of $\Delta(tx)$)}\\
    \implies &q \textbf{ executes } tx \land (p \textbf{ commits } tx \rightarrow q \textbf{ commits } tx)&&\text{(lines \ref{hearsay:cond1}, \ref{hearsay:cond2},
    \ref{hearsay:cond3}, \ref{hearsay:start}--\ref{hearsay:bad_assign})}
\end{align*}

Next we state the base case of our induction argument. 

\begin{equation}
    (\mathcal{R} \subseteq \mathcal{C} \land \exists p \in \mathcal{R}: p \textbf{ executes } tx \land h(tx) = 1) \rightarrow (\forall q \in \mathcal{R}, q \textbf{ executes } tx \land (p \textbf{ commits } tx \rightarrow q \textbf{ commits } tx))
\end{equation}

The proof of the base case uses the following logic. 

\begin{align*}
    \mathcal{R} \subseteq \mathcal{C} \land &\exists p \in \mathcal{R}: p \textbf{ executes } tx \land h(tx) = 1 &&\text{(hypothesis)}\\
    \implies &\exists p \in \mathcal{R}: p \textbf{ executes } tx \land (\Phi(tx) \setminus tx = \varnothing) &&\text{(Definition of $h(tx)$)}\\
    \implies &\forall q \in \mathcal{R}, q \textbf{ executes } tx \land (p \textbf{ commits } tx \rightarrow q \textbf{ commits } tx)&&\text{(Equation \ref{eq:agreement_helper})}
\end{align*}

To complete the argument, we assume that the following statement holds for all $tx$ and for any $h(tx) \leq k$:

\begin{equation}
    (\mathcal{R} \subseteq \mathcal{C} \land \exists p \in \mathcal{R}: p \textbf{ executes } tx \land h(tx) \leq k) \rightarrow (\forall q \in \mathcal{R}, q \textbf{ executes } tx \land (p \textbf{ commits } tx \rightarrow q \textbf{ commits } tx)) \label{eq:agreement_hypothesis}
\end{equation}

And then we perform induction over $h(tx)$: 

\begin{equation}
    (\mathcal{R} \subseteq \mathcal{C} \land \exists p \in \mathcal{R}: p \textbf{ executes } tx \land h(tx) = k+1) \rightarrow (\forall q \in \mathcal{R}, q \textbf{ executes } tx \land (p \textbf{ commits } tx \rightarrow q \textbf{ commits } tx))
\end{equation}

The proof of the inductive argument uses the following logic.

\begin{align*}
    \mathcal{R} \subseteq \mathcal{C} \land &\exists p \in \mathcal{R}: p \textbf{ executes } tx \land h(tx) = k+1 &&\text{(hypothesis)} \\
    \implies &\exists p \in \mathcal{R}: p \textbf{ executes } tx \land (\forall q \in \mathcal{R}, \forall t \in \Phi(tx) \setminus tx : h(t) \leq k, \\
    &q \textbf{ executes } t \land (p \textbf{ commits } t \rightarrow q \textbf{ commits } t)) &&\text{(Equation \ref{eq:agreement_hypothesis})} \\
    \implies &q \textbf{ executes } tx \land (p \textbf{ commits } tx \rightarrow q \textbf{ commits } tx)&&\text{(Equation \ref{eq:agreement_helper})}
\end{align*}
\end{proof}

\begin{lemma} Hearsay satisfies Integrity if no rational agent deviates from the protocol specification.
\end{lemma}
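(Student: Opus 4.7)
The plan is to prove the lemma by direct appeal to the Integrity property of Rational Reliable Broadcast (Theorem \ref{thm:rrb}), using the fact that Hearsay treats RRB-delivery as a strict prerequisite for execution. The entire argument rides on one structural observation about Algorithm \ref{algo:protocol}: a transaction is added to the execution buffer only inside the \textbf{RRB-Deliver} handler (line \ref{hearsay:buffer}), and the \textbf{Execute} routine draws exclusively from this buffer. Hence no past-correct agent ever executes a transaction that it has not previously delivered via RRB.

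First, I would formalize the contrapositive of the target statement: suppose for contradiction that some rational agent $p \in \mathcal{R}$ executes two distinct transactions $tx_1 \neq tx_2$ with $tx_1.\emph{initiator} = tx_2.\emph{initiator} = q$ and $tx_1.\emph{seq} = tx_2.\emph{seq} = s$. Since $p \in \mathcal{R} \subseteq \mathcal{C}$ by hypothesis (no rational agent deviates), $p$ executes a transaction only from the execution buffer, and entries enter the buffer only through line \ref{hearsay:buffer}. Consequently $p$ must have RRB-delivered both $tx_1$ and $tx_2$.

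Next, I would invoke the per-initiator RRB instance structure described in Section \ref{sec:rrb}: each agent $q$ owns a single RRB instance, and only $q$ may initiate broadcasts within it. Both $tx_1$ and $tx_2$ list $q$ as initiator, so both deliveries occurred in the same RRB instance (the one owned by $q$) under the same sequence number $s$. By the Integrity clause of Theorem \ref{thm:rrb}, every rational agent delivers at most one message in a given RRB instance for a given sequence number. This forces $tx_1 = tx_2$, contradicting the assumption and establishing the lemma.

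I do not expect any substantive obstacle here; the proof is essentially a one-line reduction, and the main writing task is just to make the RRB-to-Hearsay bridge explicit (that execution is gated by delivery, and that Hearsay uses $N$ disjoint RRB instances indexed by initiator). One minor point worth stating cleanly is that Integrity is about \emph{executing} distinct transactions, not about \emph{committing} them, so the argument does not need to engage with the \emph{bad} flag, the fee mechanism, or any of the dependency conditions used in the Validity and Agreement proofs; it depends only on which transactions enter the execution buffer.
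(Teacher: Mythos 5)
Your proof is correct and follows essentially the same route as the paper's: both reduce the claim to the observation that execution requires prior RRB-delivery (via the execution buffer at the \textbf{RRB-Deliver} handler) and then invoke the Integrity property of Rational Reliable Broadcast. The only cosmetic difference is that you argue by contradiction on the negation while the paper writes the same two implications as a direct chain.
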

\begin{proof}[Formal Proof]
Let $tx_1$ and $tx_2$ be any two transactions. 

Formally, the lemma states:

\begin{equation}
\begin{split}
    &\mathcal{R} \subseteq \mathcal{C} \land (\forall p \in \mathcal{R}, p \textbf{ executes } tx_1 \land p \textbf{ executes } tx_2) \\
    &\rightarrow (tx_1.initiator \neq tx_2.initiator \lor tx_1.seq \neq tx_2.seq)
\end{split}
\end{equation}

We prove this with the following logic. 
Let $p$ be an arbitrary agent and let $tx_1, tx_2$ be two arbitrary transactions. 

\begin{align*}
    &\mathcal{R} \subseteq \mathcal{C} \land p \in \mathcal{R} \land p \textbf{ executes } tx_1 \land p \textbf{ executes } tx_2 &&\text{(hypothesis)} \\
    &\implies p \in \mathcal{R} \land p \textbf{ delivers } tx_1 \land p \textbf{ delivers } tx_2 &&\text{(line \ref{hearsay:buffer})} \\
    &\implies tx_1.initiator \neq tx_2.initiator \lor tx_1.seq \neq tx_2.seq &&\text{(RRB Integrity)}
\end{align*}
\end{proof}

\end{document}